\documentclass[12pt]{article}
\usepackage{amsmath,amsthm,amssymb,bm}
\usepackage{booktabs}
\usepackage[margin=2cm]{geometry}
\usepackage{longtable}
\usepackage{pdflscape}
\usepackage{tikz}
\usepackage{xcolor}
\usepackage{tablefootnote}
\usepackage{complexity}
\usepackage{thmtools}
\usepackage{thm-restate}
\usepackage{microtype}
\usepackage[hidelinks]{hyperref}

\newcommand{\Z}{\mathbb{Z}}
\newcommand{\Q}{\mathbb{Q}}
\newcommand{\RR}{\mathbb{R}}
\newcommand{\CC}{\mathbb{C}}
\newcommand{\F}{\mathbb{F}}

\newcommand{\KK}{\mathcal K}
\newcommand{\val}{\nu_\varepsilon}
\newcommand{\uu}{\mathbf{u}}
\newcommand{\vv}{\mathbf{v}}
\newcommand{\UU}{\mathbf{U}}
\newcommand{\VV}{\mathbf{V}}
\newcommand{\APS}{\mathrm{APS}}
\newcommand{\MRV}{\mathrm{MRV}}
\newcommand{\HN}{\mathrm{HN}}
\DeclareMathOperator{\Res}{Res}
\DeclareMathOperator{\TT}{TT}
\DeclareMathOperator{\coeff}{coeff}
\DeclareMathOperator{\im}{im}

\newtheorem{theorem}{Theorem}
\newtheorem{proposition}[theorem]{Proposition}
\newtheorem{corollary}[theorem]{Corollary}
\newtheorem{lemma}[theorem]{Lemma}
\theoremstyle{definition}

\newtheorem{claim}[theorem]{Claim}
\newtheorem*{runningexample}{Running Example}

\title{Approximate Polynomial Satisfiability is in \\the Counting Hierarchy}
\author{Cool Authors}

\newif\ifreviewmarkup
\reviewmarkuptrue

\date{}

\author{
\begin{tabular}{cc}
  \begin{tabular}{c}
    Nikhil Balaji\\
   IIT Delhi, India \\
  \end{tabular}
  &
  \begin{tabular}{c}
    Mahsa Shirmohammadi\\
    CNRS, IRIF, France
  \end{tabular}
  \\[1.5em]
  \begin{tabular}{c}
    S\'ebastien Tavenas\\
    CNRS, LAMA, France
  \end{tabular}
  &
  \begin{tabular}{c}
    James Worrell\\
    University of Oxford, UK
  \end{tabular}
\end{tabular}
}

\begin{document}
\maketitle

\begin{abstract}
The Approximate polynomial satisfiability problem (APS),  introduced by Guo, Saxena, and Sinhababu (CCC 2018), asks whether the zero vector lies in the Zariski closure of the image of a given  polynomial map. Specifically, for a field~$k$ with algebraic closure~$K$, the problem asks  whether
$\bm{0}\in\overline{\bm f(K^n)}$ for a polynomial map
$\bm f=(f_1,\ldots,f_m)$ with $f_i\in k[X_1,\ldots,X_n]$.

APS is a natural topological analogue of Hilbert's Nullstellensatz, namely the question of
whether a given system of polynomial equations has a common zero. APS captures several problems in algebraic complexity, including border rank, hitting sets for border classes, and null-cone membership; it is known to be \NP-hard and in $\PSPACE$.

We show that APS lies in the Counting Hierarchy ($\CH$) over both the rationals and finite fields, substantially improving the known PSPACE upper bound. Our proof builds on a recent breakthrough due to Andrews, Garg, and Schost (FOCS 2026) on deciding Hilbert's Nullstellensatz in $\CH$. As a corollary, our result improves the complexity of certifying hitting sets for border classes from $\PSPACE$ to $\CH$. Such hitting sets play a role in explicit constructions and Noether-normalization problems arising in Geometric Complexity Theory, where a longstanding difficulty has been that algorithmic constructions for $\overline{\VP}$ can be substantially harder than the analogous problem for $\VP$.

We also give a polynomial-time reduction of Hilbert's Nullstellensatz to APS, valid in any characteristic.  In characteristic zero, we give a reduction of APS
to the decision problem for the existential theory of real closed fields.  Overall,
our results place approximate polynomial satisfiability  closer in complexity to exact polynomial feasibility and as a byproduct give improved complexity bounds for several problems arising in approximative complexity. 

\vspace{1em}
\textbf{AI Use: } 
Discussions with  GPT-5.6 Sol and Astra suggested the reduction from HN to APS in Theorem~\ref{thm:redu_HN_APS}  and led to the univariate substitution in Lemma~\ref{lem-weighted-valuation}, which simplified our original reduction from APS to MRV.
All mathematical arguments arising from these interactions were independently checked, developed, and rewritten by the authors, who take full responsibility for the contents and correctness of the paper.
\end{abstract}

\newpage 

\section{Introduction}
\label{sec-intro}

Solving systems of polynomial equations is one of the basic algorithmic problems in algebra and has connections to a remarkably broad range of computational questions. The problem of determining whether 
a given collection of polynomials with coefficients in $k$ has a common zero in $K=\overline{k}$, the algebraic closure of $k$, is called Hilbert's Nullstellensatz, denoted
$\mathrm{HN}_k$. Besides its intrinsic importance in computational algebraic geometry, $\mathrm{HN}_k$ serves as a tool to address many \emph{algebraic existence} problems. A striking example 
is tensor rank: deciding tensor rank over a field $k$ has the same computational complexity as $\mathrm{HN}_k$; in fact, tensor-rank computation and polynomial-system solvability over arbitrary integral domains are known to be polynomial-time equivalent ~\cite{SS-Tensor,shitov}. Similarly, fixing the graph of an arithmetic circuit and varying its scalar labels gives a parameterisation of the polynomials computed by that circuit~\cite{raz-universal}. This readily transforms questions about whether there \emph{exists} a computation of a prescribed form into solving a system of polynomial equations in the circuit parameters which is useful in proving the existence of hard polynomial families~\cite{CKW, SY} as well as designing blackbox identity testing algorithms~\cite{Agrawal, HS80}.

\paragraph{Approximative Complexity}

In this work, we are concerned with approximative or border complexity. In many classical questions in algebraic complexity, one must pass from objects admitting an exact representation of bounded complexity to limits of such objects.  One of the earliest and most influential examples illustrating the power of approximative complexity comes from matrix multiplication.  Bini, Capovani, Lotti, and Romani~\cite{BCLR79,BLR80} showed that allowing arbitrarily small errors can reduce the complexity of bilinear algorithms, and used this to obtain faster matrix
multiplication algorithms. This led to the systematic study of
\emph{border rank}: a tensor has border rank at most $r$ if it belongs to the
closure of the set of tensors of rank at most~$r$.  Lickteig's work in the
1980s developed the algebraic geometry of these closures and used it to prove
lower bounds on border rank~\cite{Lickteig84,Lickteig85}. Studying closure of polynomial maps has been a central theme in algebraic complexity: they occur in approximative circuit complexity, in the factor theorem of B\"{u}rgisser~\cite{Burgisser04}, and especially in Geometric Complexity Theory, where Mulmuley and Sohoni~\cite{MS01,MS08} developed a framework to prove algebraic circuit lower
bounds via non-containment of orbit closures.
For instance, the determinant-versus-permanent~\cite{agrawal-icm} problem can be phrased as the
problem of separating the orbit closure of the determinant polynomial from the orbit
closure of an appropriately padded permanent polynomial.  This reformulation makes
the problem amenable to tools from algebraic geometry, invariant theory, and representation
theory, and has generated a substantial body of work on the geometry of
algebraic computation~\cite{landsberg,Blaser-Ikenmeyer}.

This distinction between exact polynomial maps and their closures leads directly to the problem studied in this paper. 
Let $k$ be either $\Q$ or a finite field $\F_q$, and let $K=\overline{k}$ be an algebraic closure of $k$.
The approximate polynomial satisfiability problem $\APS_k$ asks whether
$\bm{0}\in\overline{\bm f(K^n)}$ for a polynomial map
$\bm f=(f_1,\ldots,f_m)$ with $f_i\in k[X_1,\ldots,X_n]$.   The latter condition is a weakening of the requirement
$\bm{0}\in{\bm f(K^n)}$, i.e., that $f_1,\ldots,f_m$ have a common zero. It is known that $\APS_k$ lies in \PSPACE~\cite{GSS19}, while a celebrated result of Koiran~\cite{koiran} shows that 
 $\mathrm{HN}_{\mathbb Q}$ lies in the polynomial hierarchy, under the generalised Riemann hypothesis.  Recently it was shown that $\mathrm{HN}_k$ lies in the counting hierarchy unconditionally when $k$ is either the  rationals or a  finite field~\cite{AGS26}.

%  In particular, we can think of $\mathrm{APS}_{\mathbb Q}$ as intermediate between
% $\mathrm{HN}_{\mathbb Q}$ and $\mathrm{ETR}$. A key technical contribution of the latter paper is showing the existing of constant-depth exponential-size circuits for the multivariable resultant.  As we observe in Section~\ref{sec-etr}, there are 
%  polynomial-time reductions from $\mathrm{HN}_k$  to $\mathrm{APS}_k$ and 
% from $\mathrm{APS}_{\mathbb Q}$ to the decision problem $\mathrm{ETR}$ for the existential theory of real closed fields.

% The APS problem naturally arises in several different settings in algebraic complexity theory. 
% Given a tensor $T$ over the field $k$ and nonnegative integer $r$,
% whether the border rank of $T$ is at most $r$ can be directly formulated as instance of $\APS_k$, where the sysyem of equations expresses that 
% $T$ can be recovered as the sum of $r$ tensors of rank at most one.   Another application of APS arises in the construction
% of hitting sets for closure of Valiant's class VNP.  Specifically,
% using a universal circuit, one can use APS to
% test whether a proposed hitting set
% for a bounded-degree class of polynomials
% misses a
% nonzero limiting polynomial.  Combining this test with exhaustive enumeration of
% polynomial-size candidate sets gives a deterministic construction of hitting sets
% for bounded-degree polynomials of bounded approximative circuit size.

\medskip

Our main result shows that  $\APS_k$ lies in the counting hierarchy.  The proof naturally divides into two steps.
In the first step we give a randomised polynomial-time reduction of $\APS_k$ to 
a decision problem (called the multivariate resultant valuation $\mathrm{MRV}_k$ problem) 
concerning the resultant of a collection of polynomials with coefficients in the valued field
$\mathcal K:=\overline{K(\varepsilon)}$, that is, the algebraic closure of the field of rational functions in a parameter $\varepsilon$
with coefficients in $K$.  We specify the latter problem precisely in Section~\ref{sec-overview}.  The key fact about the latter problem is that it can 
be decided in the counting hierarchy, using the aforementioned result of~\cite{AGS26}.  This shows that $\APS_k$ also lies in
the counting hierarchy.

\paragraph{Algebraic Independence}    
A closely related problem is \emph{algebraic independence testing}.  
A tuple of
polynomials ${\bm f} = (f_1,\ldots,f_m)\in k[X_1,\ldots,X_n]^m$ is said to be algebraically dependent if there exists a nonzero polynomial $A(Y_1,\ldots,Y_m)\in k[Y_1,\ldots,Y_m]$ 
such that $A(f_1,\ldots,f_m)=0$. We call $A$ an \emph{annihilating polynomial}. Equivalently, writing
\[
I_{\bm f}:=
\{A\in k[Y_1,\ldots,Y_m]:A(\bm f)=0\}
\]
for the ideal of annihilating polynomials, algebraic independence asks whether \(I_{\bm f}=(0)\).  The maximum size of an algebraically independent subtuple is the \emph{transcendence degree} of \(\bm f\). 

The computational complexity of checking whether a given set of polynomials are algebraically dependent crucially depends on the characteristic of the field where the coefficients come from.  Over fields of characteristic zero, and more generally in sufficiently large characteristic, the Jacobian criterion reduces algebraic independence testing to polynomial identity testing, giving a randomised polynomial-time algorithm~\cite{BeeckenMittmannSaxena13}.  In small positive characteristic the Jacobian criterion fails because of \emph{inseparability}.  Guo, Saxena, and Sinhababu showed that over finite fields algebraic independence testing lies in $\AM\cap\coAM$, substantially improving the previously known upper bound~\cite{GSS19}.

Kayal~\cite{Kayal09} investigated the complexity of computing the annhilating polynomial and showed that is $\NP$-hard to test if the constant term of the annhilating polynomial is zero and $\#\P$-hard to evaluate the constant term of the annhilator modulo a given prime $p$. Both these results strongly suggest that the annhilating polynomial is unlikely to be computed by small algebraic circuits. Guo, Saxena, and Sinhababu further showed that $\bm 0\in\overline{\bm f(K^n)}$ if and only if every annihilator of~\(\bm f\) vanishes at the origin: $A(0)=0$ for every $A\in I_{\bm f}$.
Thus algebraic independence testing asks whether there are any algebraic relations among \(f_1,\ldots,f_m\), whereas APS asks whether all such relations vanish at \(\bm 0\). We strengthen Kayal's result to show that this problem is $\NP_{k}$-hard by showing a reduction from $\HN_k$.

\subsection{Applications of APS}
Besides being a natural analogue of Hilbert's Nullstellensatz, APS is versatile in understanding approximative complexity. Here we survey a few such applications.

\paragraph{Hitting sets and Noether normalisation.}
Let $\mathcal C$ be a family of $n$-variate polynomials.  A finite set
$H\subseteq k^n$ is a \emph{hitting set} for $\mathcal C$ if every nonzero
$f\in\mathcal C$ is nonzero at some point of $H$. Constructing explicit hitting sets is a long-standing open problem in complexity theory and is intimately connected with algebraic circuit lower bounds. A classical result of Heintz and Schnorr~\cite{HS80} shows that a random set
of points from a sufficiently large finite grid in $k^n$ is a hitting set with high probability. However, certifying whether a set of points from $k^n$ is a hitting set is a non-trivial task: since arbitrary constants from $k$ are allowed to appear in such circuits in Valiant's model, there
are potentially uncountably many polynomials computable by \emph{small} algebraic circuits, so it is not possible simply to enumerate all polynomials computed by such circuits. To certify a candidate hitting set $H$ for $\VP$, observe that $H$ fails to be a hitting set precisely when there is
a specialisation of a small circuit that computes a non-zero polynomial
vanishing on all points of $H$.  This condition can be encoded by a system of
polynomial equations, and hence by an instance of Hilbert's Nullstellensatz.
This observation gives a $\PSPACE$ procedure to certify a hitting set and using Koiran's theorem,
a $\PH$ construction assuming GRH.  In view of the recent result
$\HN_k\in\CH$~\cite{AGS26}, the same certification can now be
performed unconditionally in the Counting Hierarchy.

A natural question that arises in this context is the complexity of certifying hitting sets for the border class $\overline{\VP}$. A polynomial
in $\overline{\VP}$ is not known to admit small algebraic circuits: it is, however, a limit
of polynomials having small circuits. A striking example of the power of approximative complexity is the case of width-2 algebraic branching programs. It is known that this model is not universal, i.e. there exist explicit polynomials that cannot be computed by width-2 ABPs~\cite{Allender-Wang}. However, the corresponding border class is known to be universal~\cite{BIZ}. Thus, a set that is a hitting set for every member
of an approximating family need not be a hitting set for its limit, and the preceding discussion about certifying hitting sets for $\VP$ does not directly translate to the analogous question for $\overline{\VP}$. The work of Forbes and Shpilka~\cite{ForbesShpilka18} and Guo, Saxena, and Sinhababu~\cite{GSS19} addressed precisely this issue. Forbes and Shpilka~\cite{ForbesShpilka18} gave a
PSPACE upper bound to certify hitting sets for $\overline{\VP}$ via the notion of \emph{robust hitting sets}. Guo, Saxena, and Sinhababu~\cite{GSS19} subsequently generalised the $\PSPACE$ bound to hold over arbitrary fields, via a reduction to APS. Both works~\cite{ForbesShpilka18, GSS19} raise the question of whether the complexity of certifying hitting sets for $\overline{\VP}$ can be improved to $\PH$.

Hitting sets also arise in Mulmuley's formulation of
Noether normalisation in Geometric Complexity Theory~\cite{Mulmuley17}. Mulmuley shows~\cite{Mulmuley17}
that for any explicit families of varieties (and specifically for the determinantal variety $\Delta[\det,m]$), the construction of a \emph{separating
explicit system of parameters} reduces to the construction of hitting sets.
Mulmuley observed a substantial gap between the complexity of the underlying
algebraic objects and the then-known complexity of constructing such
normalizing maps, calling this the \emph{GCT chasm}.  The PSPACE upper bounds for certifying hitting sets due to
Forbes–Shpilka and Guo–Saxena–Sinhababu substantially narrowed this gap.
Our result further mitigates the GCT chasm to $\CH$.

\paragraph{Null Cone Problem}

A basic closure problem in computational invariant theory is \emph{null-cone membership}. Let a reductive algebraic group \(G\) act linearly on a vector space \(V\).  The null cone is
\[
    \mathcal N_G(V)
      :=\{v\in V:0\in\overline{G\cdot v}\}.
\]
This definition has several equivalent interpretations which make the problem particularly rich.  In classical invariant theory, \(\mathcal N_G(V)\) is precisely the common zero set of all homogeneous \(G\)-invariant polynomials of positive degree.  Thus null-cone membership is a highly structured analogue of polynomial-system feasibility, where the defining equations are given only implicitly through the group action rather than explicitly as an input system. 

The Hilbert–Mumford criterion says that \(v\in\mathcal N_G(V)\) if and only if there is a one-parameter subgroup $\lambda:\mathbb G_m\to G$
such that $\lim_{t\to0}\lambda(t)\cdot v=0$. Thus membership in the null cone is already a statement about algebraic degeneration: although no group element sends a nonzero \(v\) exactly to zero, a suitable one-parameter family may do so in the limit.

Over \(\CC\), there is a complementary analytic formulation.  If $\operatorname{cap}(v):=\inf_{g\in G}\|g\cdot v\|$, then $v\in\mathcal N_G(V)$ if and only if 
$\operatorname{cap}(v)=0$. This optimisation-theoretic point of view has been algorithmically fruitful. B\"urgisser, Garg, Oliveira, Walter, and Wigderson~\cite{BGOW18} showed that null-cone membership captures noncommutative rank and identity testing over the free skew field with several connections to problems in noncommutative algebra, combinatorial optimisation, quantum information, and Geometric Complexity Theory.

 For a fixed vector \(v\), the orbit map $G\to V$, given by $g \mapsto g\cdot v$ is an algebraic morphism, and the condition~$0\in\overline{G\cdot v}$
can therefore be framed as an image-closure membership problem.  Guo, Saxena, and Sinhababu observed that the null-cone problem considered in~\cite{BGOW18} can be formulated as an instance of APS and consequently obtained a $\PSPACE$ upper bound~\cite{GSS19}, improving the previously known exponential-space bound. Independently, \cite{BGOW18} give an exponential-time algorithm for this problem.

\section{Main Results and Proof Overview}
\label{sec-overview}
\paragraph{Background.} Recall that an algebraic closure of a field $k$ is an algebraic extension
$\overline{k}$ of $k$ that is algebraically closed. This means that every
element of $\overline{k}$ satisfies a nonzero polynomial equation with
coefficients in $k$, and every nonconstant polynomial with coefficients in
$\overline{k}$ has a root in $\overline{k}$.  In this paper, $k$ is either the
field $\Q$ of rational numbers or a finite field $\F_q$, and
$K:=\overline{k}$ denotes an algebraic closure.

Denote by $K(\varepsilon)$ the field of rational functions in the formal parameter $\varepsilon$ with coefficients in~$K$.
This field carries a valuation $\val:K(\varepsilon)\to\mathbb Z\cup\{\infty\}$ which intuitively gives the order
to which the parameter~$\varepsilon$ divides $f$.  Formally, $\val(f):=j_0$ for a polynomial $f=\sum_{j=j_0}^d a_j\varepsilon^j$,
where $a_{j_0}\neq 0$. Moreover, $\val(f/g):=\val(f)-\val(g)$ for non-zero polynomials $f,g\in K[\varepsilon]$; and 
$\val(0):=\infty$.  The valuation $\val$ can be extended to a valuation on the algebraic closure 
$\KK:=\overline{K(\varepsilon)}$ that takes values in~$\Q$.  We choose such an extension and keep the notation $\val$.
We say that $b$ has positive valuation if
$\val(b)>0$.
Recall also the basic properties
\[
\val(ab)=\val(a)+\val(b)\quad\text{and}\quad 
\val(a+b)\geq\min\{\val(a),\val(b)\},
\]
where 
equality holds in the addition relation whenever the terms of least valuation
do not cancel. 

Below, we  follow the convention that polynomial variables are written in upper
case, whereas  values are written in lower case. Unless a different
range is specified, $\bm X=(X_1,\ldots,X_n)$ and
$\bm Y=(Y_1,\ldots,Y_{n+1})$ denote tuples of variables;
their values $\bm x$ and $\bm y$ are tuples of elements of the relevant
ring. For
$\bm\alpha\in\Z_{\geq0}^n$, denote by
$\bm X^{\bm\alpha}$ the monomial~$X_1^{\alpha_1}\cdots X_n^{\alpha_n}$.

We  write $[m]=\{1,\ldots,m\}$ and $[s,m]=\{s,s+1,\ldots,m\}$, with the
latter set empty when~$s>m$. 
\paragraph{The APS Problem.}
For a field $k$ as above,
the approximate polynomial satisfiability problem ($\APS_k$ for short) is a relaxation
of polynomial satisfiability, also called the Hilbert Nullstellensatz
problem ($\mathrm{HN}_k$). Let
\[
\bm f=(f_1,\ldots,f_m):K^n\longrightarrow K^m,
\qquad (\text{with } f_i\in k[\bm X]),
\]
be a polynomial map.  We use standard conventions for representing polynomials; see Section~\ref{sec-prelim}
for details.
The $\mathrm{HN}_k$ problem asks whether there exists a common zero
$\bm x^*\in K^n$ of the polynomials~$f_1,\ldots,f_m$; equivalently, it asks
whether $\bm0\in\bm f(K^n)$. The $\APS_k$ problem replaces the image of~$\bm f$ by its Zariski closure and asks instead whether
\[
\bm0\in\overline{\bm f(K^n)}.
\]
Clearly, every positive instance of $\mathrm{HN}_k$ is a positive instance of $\APS_k$.
On the other hand, a system may be a positive instance of $\APS_k$ without having a common zero:
for example, $\bm f=(XY-1,X)$ has no common zero, but
\[
\im(\bm f)=\{(z_1,z_2)\in K^2:z_2\neq0\}\cup\{(-1,0)\}
\]
is dense in $K^2$, so $\bm0\in\overline{\im(\bm f)}$.

Both $\mathrm{HN}_k$ and $\APS_k$ admit purely algebraic formulations, respectively in
terms of the ideal and the sub-$k$-algebra generated by the $f_i$'s within $k[\bm X]$.
The weak Nullstellensatz says that the system
$\{f_i=0:i\in[m]\}$ has no common zero in $K^n$ exactly when the ideal in
$k[\bm X]$ generated by the $f_i$'s contains $1$. Explicitly, $\bm f$ is a
negative instance of $\mathrm{HN}_k$ if and only if there exist $g_i\in k[\bm X]$ such
that $\sum_{i=1}^m g_if_i=1$.
On the other hand, $\bm f$ is a negative instance of $\APS_k$ exactly when
there is an annihilating polynomial of $f_1,\ldots,f_m$ with nonzero
constant term. Explicitly, this says that there exist nonzero vectors
$\bm\alpha_1,\ldots,\bm\alpha_s\in\Z_{\geq0}^m$ and coefficients
$c_1,\ldots,c_s\in k$ such that
\[
\sum_{i=1}^s c_i f_1^{\alpha_{i,1}}\cdots f_m^{\alpha_{i,m}}=1.
\]

We will make extensive use of an equivalent formulation of
$\APS_k$ in terms of the valued field $\KK$: namely, an instance is positive 
precisely when there exists $\bm x^*(\varepsilon)\in\KK^n$ such that
\[
\val\bigl(f_i(\bm x^*(\varepsilon))\bigr)>0\qquad(\text{with }i\in[m]).
\]
The equivalence of the above with the image-closure formulation is valid in arbitrary
characteristic. Indeed, the curve formulation of APS in
\cite[Theorem~4.2]{GSS19} gives a witness $\bm x^*(\varepsilon)$ with coordinates in
$K[\varepsilon,\varepsilon^{-1}]$ whenever the origin belongs to the image closure. Conversely,
if a valued-field witness exists, substituting it into any annihilator and
reducing in the valuation ring shows that the annihilator has zero constant
term. We can think of the image of $\bm x^*(\varepsilon)$ as a parametrised algebraic
curve having $\bm0$ in its Zariski closure. Thus an HN witness makes all input
polynomials exactly zero, whereas an APS witness makes all their values
infinitesimal.

\paragraph{Statement of Main Results.}
Our main result states that $\APS_k$ can be decided in the counting hierarchy:
\begin{restatable}{theorem}{mainTheorem}
\label{theo-main}
Let $k$ be either $\Q$ or a finite field $\F_q$.  Then $\APS_k$ lies 
in the counting hierarchy. 
\end{restatable}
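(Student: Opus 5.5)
Our plan follows the two-step outline given in the introduction: first a randomised polynomial-time reduction from $\APS_k$ to a decision problem $\MRV_k$ about valuations of multivariate resultants over $\KK$, and then a proof that $\MRV_k$ lies in $\CH$ using the constant-depth circuits for resultants underlying~\cite{AGS26}. Since $\CH$ is closed under randomised (BPP-type) polynomial-time reductions (indeed $\mathsf{BPP}\subseteq\PP$ and $\CH$ relativises well), the composition gives Theorem~\ref{theo-main}.

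\emph{Step 1 (reduction to a square system).} We start from the valued-field formulation: $\bm f$ is positive iff some $\bm x^*\in\KK^n$ has $\val(f_i(\bm x^*))>0$ for all $i$. We replace $f_1,\ldots,f_m$ by $n$ (or $n+1$) random $k$-linear combinations $g_j=\sum_i r_{ji}f_i$. A valuation witness for $\bm f$ is clearly one for $\bm g$. Conversely, we argue with a Bertini/Noether-type genericity lemma that generic combinations preserve the property. The relevant closure $\overline{\bm f(K^n)}$ has dimension at most $n$, and a generic linear projection $K^m\to K^{n}$ is finite on it, so the preimage of $\bm0$ in the closure is just $\{\bm0\}$. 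Schwartz--Zippel then bounds the failure probability once the $r_{ji}$ are drawn from a polynomial-size set.

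Next we homogenise with an extra variable, giving polynomials in $\bm Y=(Y_1,\ldots,Y_{n+1})$, and perturb by $\varepsilon$. Concretely, we consider $G_j(\bm Y)-\varepsilon^{w}\,Y_j^{d_j}$-style deformations so that the system of $n+1$ homogeneous polynomials in $n+1$ variables has a resultant $R(\varepsilon)\in K[\varepsilon]$ that is not identically zero. The key lemma, which we expect to correspond to Lemma~\ref{lem-weighted-valuation}, states that $\bm 0\in\overline{\bm f(K^n)}$ iff a certain weighted valuation of $R$ (its lowest $\varepsilon$-order after a suitable univariate substitution) exceeds an explicitly computable threshold. This is the $\MRV_k$ instance.

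\emph{Step 2 ($\MRV_k\in\CH$).} By~\cite{AGS26}, the multivariate resultant of polynomials with coefficients given by small circuits is computed by uniform constant-depth exponential-size threshold circuits, i.e.\ individual bits of its coefficients are $\CH$-computable. Substituting $\varepsilon$ symbolically keeps this true for each $\varepsilon$-coefficient $\coeff_{\varepsilon^j}(R)$, since it is an iterated sum of products. Deciding whether $\val(R)>t$ then amounts to checking that $\coeff_{\varepsilon^j}(R)=0$ for all $j\le t$, a $\coNP$-type quantifier over polynomially many bits of $j$ on top of a $\CH$ predicate. This stays in $\CH$. Over $\F_q$ we work with arithmetic in extension fields, using the finite-field version in~\cite{AGS26}.

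\emph{Main obstacle.} We expect the hard part to be the correctness of the valuation criterion in Step 1. We must show that positivity of the APS instance is exactly captured by the $\varepsilon$-order of the resultant of the deformed system, in every characteristic. This requires relating roots of the deformed system in $\KK$ (via Puiseux-type valuations and Newton polygon arguments) to points with $\val(f_i)>0$, while excluding spurious solutions at infinity. We would first try to prove it via the Poisson product formula, $R=\prod$ over roots of the remaining equation evaluated at common roots of the others, which turns $\val(R)$ into a sum of valuations at solutions.
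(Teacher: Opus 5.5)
Your overall architecture matches the paper, and Step~2 is essentially right: a randomised reduction to a question about $\varepsilon$-valuations of resultant coefficients, then $\MRV_k\in\CH$ via~\cite{AGS26}, then closure of $\CH$ under randomised reductions. The gap is the ``key lemma'' of Step~1. You never formulate it, and in the form you suggest it cannot work.

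You use one parameter $\varepsilon$ both as the APS infinitesimal and as the perturbation $G_j-\varepsilon^wY_j^{d_j}$ of all $n+1$ homogenised polynomials. You then read off the $\varepsilon$-order of a single resultant $R(\varepsilon)$. That order reflects intersection data of the homogenised system over $K$, including points at infinity, yet APS witnesses may themselves escape to infinity. Consider the minimally dependent systems $(XY-1,\,X,\,X^2Y)$ and $(XY-1,\,X,\,X+XY)$. The first is positive (take $X=\varepsilon$, $Y=\varepsilon^{-1}$). The second is negative, since $f_3-f_2-f_1=1$. Both homogenisations have $[0:0:1]$ as their only common projective zero, so the unperturbed resultant vanishes in both cases. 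Nothing in your proposal says how the order of the deformed resultant would separate them.

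More fundamentally, even in the paper's setup one has $R(0)=c\prod_{\bm x\in V}(-f_{n+1}(\bm x))^{m_{\bm x}}$. So $\val(R(0))$ is $\val(c)$ plus a sum of valuations of both signs. Whether \emph{some} summand is positive is not decided by comparison with any a priori, explicitly computable threshold.

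The paper supplies the missing ideas in four steps.
\begin{itemize}
\item It first reduces to \emph{minimally dependent} instances by adjoining $Y,YX_1,\ldots,YX_s$ before projecting randomly to $s+2$ coordinates. This is what makes the fibres below finite and nonempty. Your projection does neither. Projecting to $n$ coordinates when $\dim\overline{\bm f(K^n)}=n$ makes every instance positive. Projecting to $n+1$ without controlling the transcendence degree can leave empty or positive-dimensional fibres.
\item It prescribes $f_i=a_i\varepsilon$ for random $a_i$, $i\le n$. Using the irreducible annihilator $H$ and Lang's extension lemma for the integral extension $B_h\subseteq C_h$, it proves the instance is positive iff some $\bm x\in V$ has $\val(f_{n+1}(\bm x))>0$.
\item It keeps the value of $f_{n+1}$ as a free variable $Z$ and uses separate perturbation variables $T,W$, so that $\TT_W\TT_T R_0=c\prod_{\bm x\in V}(Z-f_{n+1}(\bm x))^{m_{\bm x}}$. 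Points at infinity then contribute only $Z$-free factors.
\item The criterion is $\val(R(0))>\val(R)$, where $\val(R)$ is the Gauss valuation in $Z$. This threshold is itself a minimum over resultant coefficients. That is why $\MRV$ compares a distinguished coefficient with $\min_{\bm\beta}\val(c_{\bm\beta})$, and why the $\CH$ algorithm needs an extra existential quantifier over $(\bm\beta,j)$.
\end{itemize}
Lemma~\ref{lem-weighted-valuation} is only the device that replaces the iterated trailing coefficients by the substitution $T=\varepsilon^{m^2}$, $W=\varepsilon^m$. It is not the correctness criterion you attribute to it.
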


To explain the two main steps of the proof, we introduce a second decision problem:
\emph{the multivariate resultant valuation problem} $\mathrm{MRV}_k$.
The input to this problem consists of a system of polynomials 
$\bm P=(P_0,\ldots,P_n)\in k[\varepsilon,\bm Y][X_0,\ldots,X_n]^{n+1}$,
where $\bm Y=(Y_1,\ldots,Y_v)$ and each $P_i$ is homogeneous of
positive degree in $X_0,\ldots,X_n$.  The input also specifies
a monomial $\bm Y^{\bm\beta_0}$ with~$\bm\beta_0\in\Z_{\geq0}^v$.
Writing
\[
\Res(\bm P)=\sum_{\bm\beta}c_{\bm\beta}(\varepsilon)\bm Y^{\bm\beta},
\]
the question is whether
\[
\val(c_{\bm\beta_0}(\varepsilon))>
\min_{\bm\beta}\val(c_{\bm\beta}(\varepsilon)).
\]
%Clearly the minimum on the right-hand side above just depends on the non-zero coefficients
%of the resultant.

The proof of Theorem~\ref{theo-main} relates the APS and MRV problems as follows:
\begin{restatable}{theorem}{maintheoreduc}
\label{theo-main-reduc}
There is a randomised polynomial-time Turing
reduction from $\APS_{\mathbb Q}$ to $\mathrm{MRV}_{\mathbb Q}$.
For a finite field $k$, there is a randomised polynomial-time Turing
reduction from $\APS_k$ to $\mathrm{MRV}_\ell$, where~$\ell$ is a finite extension of $k$
of degree polynomial in the input length.
\end{restatable}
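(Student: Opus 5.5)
Our plan is to turn the valued-field witness formulation of $\APS_k$ into a question about the lowest-order coefficient of a multivariate resultant. By the curve formulation recalled above, $\bm f$ is a positive instance if and only if there is $\bm x^*\in\KK^n$ with $\val(f_i(\bm x^*))>0$ for all $i\in[m]$. We proceed in three steps.

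\emph{Step 1: reduce to $n+1$ generic combinations.} Choose random coefficients $\lambda_{ji}$ from a polynomially large set $S\subseteq k$, or from an extension $\ell$ of polynomial degree when $k=\F_q$ is too small. Replace $f_1,\ldots,f_m$ by $g_j=\sum_i\lambda_{ji}f_i$ for $j\in[n+1]$. If $\bm x^*$ is a witness for $\bm f$, it is one for $\bm g$, since valuations of linear combinations do not decrease. For the converse, the image closure $V=\overline{\bm f(K^n)}$ has dimension at most $n$. The linear projection given by $(\lambda_{ji})$ is, for generic $\lambda$, finite on the affine cone over $V$ near $\bm0$. Concretely, the projection's kernel meets the projective tangent cone of $V$ only trivially, so $\bm0\in\overline{\bm g(K^n)}$ forces $\bm0\in V$. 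A Schwartz--Zippel argument on the degree of the bad locus, which is bounded by a Bézout-type bound $\le(\max\deg f_i)^{O(n)}$, gives the error probability.

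\emph{Step 2: homogenise and introduce an $\varepsilon$-perturbation.} We want $\val(g_j(\bm x))>0$ for all $j$. Substitute $X_i\mapsto X_i/X_0$ and set $P_j=X_0^{D}\bigl(g_j(\bm X/X_0)-Y_j\,\varepsilon^{w}X_0^{d_j}\bigr)$ for suitable degrees $d_j$, with $\bm Y$ a tuple of indeterminates. We would use a univariate weighted substitution, in the spirit of Lemma~\ref{lem-weighted-valuation}, to trade existence of a point with positive valuation for vanishing of a resultant over $\KK$. The idea is that $\Res(\bm P)$, as a polynomial in $\bm Y$, vanishes at $\bm y\in\KK^{n+1}$ exactly when the perturbed system has a projective solution. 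Solutions with $x_0$ of small valuation correspond to affine points where $g_j$ equals $y_j\varepsilon^{w}$. The existence of such points with $\val(y_j)\ge 0$ then becomes a Newton-polytope/tropical condition on the coefficients $c_{\bm\beta}(\varepsilon)$.

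\emph{Step 3: read off the answer from one coefficient.} Take $\bm\beta_0$ to be the monomial corresponding to $\bm Y=0$, i.e.\ the constant coefficient, or the appropriate extremal monomial after the substitution. We then show that the APS instance is positive iff $\val(c_{\bm\beta_0})>\min_{\bm\beta}\val(c_{\bm\beta})$. For the forward direction, a witness makes the resultant vanish at a point $\bm y$ with $\val(\bm y)\ge0$ and residues generic. Reducing the resultant modulo the maximal ideal after normalising by $\varepsilon^{-\min}$ shows that the leading-order polynomial vanishes at residues, so its constant term is zero. For the converse, the leading-order polynomial in $\bm Y$ is nonzero with zero constant term, so it has a root with some nonzero residue. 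This root lifts via Hensel/Puiseux to a root of $\Res$ in $\KK$, hence to a projective solution. We must then show this solution is not at infinity ($x_0\neq0$ with appropriate valuation). A second random generic linear change of coordinates handles the spurious solutions at $x_0=0$, and the Turing reduction may query several choices of $\bm\beta_0$ or $w$.

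We expect the main obstacle to be Step 3's converse, namely controlling solutions at infinity and solutions whose $\bm x$-coordinates have negative valuation, so that a vanishing leading term genuinely yields an APS witness. Our first attempt would be the weighted substitution with $w$ exceeding the degree bound times $n$, combined with the genericity of Step 1, to argue that the resultant's leading form factors through the affine chart.
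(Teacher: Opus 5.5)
Your proposal has a genuine gap: your Step~2 fails exactly on the instances that matter. At $X_0=0$ each $P_j$ reduces to the top-degree part $g_j^{\mathrm{top}}(X_1,\ldots,X_n)$, or to $0$ if you homogenise to a larger degree. This does not depend on $\bm Y$ or $\varepsilon$. There are two cases:
\begin{itemize}
\item If the $g_j^{\mathrm{top}}$ have a common zero in $\mathbb P^{n-1}$, that point at infinity solves the system for every specialisation. So $\Res(\bm P)$ is identically zero, every $c_{\bm\beta}$ vanishes, and your MRV query answers no.
\item If they have no common zero, then $K[\bm X]$ is finite over $K[g_1,\ldots,g_{n+1}]$. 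So $\bm g$ has closed image and $\bm 0\in\overline{\bm g(K^n)}$ iff $\bm 0\in\bm g(K^n)$. The instance is then just an $\HN$ instance, and $\varepsilon$ plays no role.
\end{itemize}
Hence every positive instance with no exact zero, such as $(XY-1,X)$, is rejected. So is the paper's running example, even though $\bm f(\bm 0)=\bm 0$: its top forms $X_1^2,X_1X_2,X_1X_2$, and those of generic linear combinations, all vanish at $[0:0:1]$.

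A random linear change of coordinates cannot repair this. Having a common projective zero is invariant under linear substitutions; the resultant only acquires a power of the determinant. Moreover, the points at infinity are not spurious: they are exactly where witnesses such as $(x,y)=(\varepsilon,\varepsilon^{-1})$ escape. There is also a smaller problem in Step~3. The forward direction needs $w$ strictly below the witness valuations, so that $\bm y$ has zero residue. A single witness does not give you ``generic residues''.

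What is missing is a mechanism that separates affine witnesses from the degenerate contribution at infinity. The paper builds it in three steps.
\begin{itemize}
\item It reduces to a minimally dependent instance (Proposition~\ref{prop-subsets-n-independent}). It then prescribes $f_i=a_i\varepsilon$ for $i\in[n]$ with random $a_i$, leaving $f_{n+1}$ free.
\item Proposition~\ref{prop-choice-ai-positive-image} shows that for generic $\bm a$ the affine solution set $V$ is finite and nonempty. It also shows the instance is positive iff some $\bm x\in V$ has $\val(f_{n+1}(\bm x))>0$. The proof uses the irreducible annihilator $H$ (a root of nonnegative valuation of $\varepsilon^{-d}H(\bm a\varepsilon,\varepsilon U)$). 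It then lifts that root through an integral extension obtained by localisation (Proposition~\ref{prop-lang}).
\item It perturbs $G_i$ to $G_i+TX_i^{d_i}$ and adds the form $L=ZX_0^d+\sum_iW^iX_i^d-F$. By Lemma~\ref{lem-resultant-images}, $\TT_W\TT_T$ of this resultant equals $c\prod_{\bm x\in V}(Z-f_{n+1}(\bm x))^{m_{\bm x}}$. The factors coming from points at infinity collapse to nonzero scalars.
\end{itemize}
A root of positive valuation is then detected by the Gauss-valuation test $\val(R)<\val(R(0))$. The substitution $T=\varepsilon^{m^2}$, $W=\varepsilon^m$ (Lemma~\ref{lem-weighted-valuation}) turns the iterated trailing coefficient into a single MRV query with $\bm\beta_0=\bm 0$. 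This is repeated over the $n+1$ choices of the free coordinate.

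Your plan has no counterpart to the finite-$V$ reduction or to the $T$/$W$ trailing-coefficient step. Without them, a resultant viewed as a polynomial in $\bm Y$ cannot certify an approximate zero.
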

To complete the proof of Theorem~\ref{theo-main}
we show that $\mathrm{MRV}_k$ lies in the counting hierarchy.  This step uses recent results of~\cite{AGS26} on the existence of uniform
constant-depth arithmetic circuits for the resultant.

In the remainder of this section we give an overview of the proofs of 
Theorems~\ref{theo-main} and~\ref{theo-main-reduc}.  Full details are given
in Sections~\ref{sec-algebaric} and~\ref{sec-algorithm}.

\paragraph{Algebraic Independence of Input Polynomials.}
Recall that polynomials $f_1,\ldots,f_n\in k[\bm X]$ are algebraically independent if 
there is no nonzero
polynomial $P\in k[Y_1,\ldots,Y_n]$ such that $P(f_1,\ldots,f_n)=0$, that is,
the
annihilator ideal of the $f_i$'s is the zero ideal. Since the annihilator
ideal defines the Zariski closure of the image of
$(f_1,\ldots,f_n):K^n\to K^n$, algebraic independence is equivalent to this
image being Zariski dense in~$K^n$.
The transcendence degree of a family of polynomials is the maximum cardinality 
of an algebraically independent subset.

For example, the coordinate polynomials $X_1,X_1X_2,X_1^2$ have
transcendence degree $2$: $X_1$ and $X_1X_2$ are algebraically independent,
whereas $X_1$ and $X_1^2$ satisfy the relation $Y_1^2-Y_2=0$.
Generally speaking, any collection of polynomials in $n$ variables has
transcendence degree at most $n$, since they all belong to $k(\bm X)$,
whose transcendence degree over $k$ is $n$. However, having transcendence degree
$n$ does not imply that every proper subset of $n+1$ coordinate
polynomials is algebraically independent.

Let $\bm g=(g_1,\ldots,g_m):K^s\to K^m$ be an instance of $\APS_k$.
The random-projection result of~\cite[Theorem 4.6]{GSS19} can be used to (with high probability)
transform such an instance
into one in which the number of
coordinate polynomials is one more than their transcendence degree, without changing the APS status. 
We call a map $\bm f=(f_1,\ldots,f_{n+1}):K^n\to K^{n+1}$
\emph{minimally dependent} if every proper subset of its coordinate
polynomials is algebraically independent. We need a strengthening of the
reduction in~\cite{GSS19} in which the resulting map is minimally
dependent.

\begin{restatable}{proposition}{propsubsetsnindependent}
\label{prop-subsets-n-independent}
There is a randomised polynomial-time reduction from the general
$\APS_k$ problem to the version for minimally dependent instances. Over a finite field,
the output coefficients may lie in a finite extension $\ell/k$ of
polynomial degree; over $\Q$, they remain rational.
\end{restatable}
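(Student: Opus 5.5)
The plan is to start from the random-projection reduction of~\cite[Theorem~4.6]{GSS19} and then make it stronger by taking a generic linear combination of the coordinates. Let $\bm g=(g_1,\ldots,g_m)$ be an instance in $s$ variables with transcendence degree $r$. We choose a random matrix $A\in \ell^{(r+1)\times m}$ with entries from a large enough finite subset $S$ of $\ell$ (for $k=\Q$ we take $\ell=\Q$), and form $\bm h=A\bm g$. For generic $A$, $\bm0\in\overline{\bm g(K^s)}$ if and only if $\bm0\in\overline{\bm h(K^s)}$. This is exactly the statement of~\cite[Theorem~4.6]{GSS19}. Geometrically, $V=\overline{\im \bm g}$ is an irreducible cone-like variety of dimension $r$, and a generic linear projection $K^m\to K^{r+1}$ is finite on $V$. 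It therefore sends $V$ onto a closed hypersurface, and its fibre over $\bm0$ meets $V$ only at $\bm0$, when $\bm0\in V$.

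Next we reduce the number of variables so that the domain dimension equals the transcendence degree. We substitute $\bm X=B\bm Z+\bm c$ for a random affine map $K^r\to K^s$. For generic $(B,\bm c)$ the restriction of $\bm g$ to a generic $r$-dimensional affine subspace still has transcendence degree $r$, since the image closure of a generic linear section is again $V$, the fibres having dimension $s-r$. So the APS status is unchanged, because the image closure is unchanged. We then have $n=r$ variables and $n+1$ coordinates $\bm f=A\bm g(B\bm Z+\bm c)$, whose transcendence degree is $n$.

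The remaining task is minimal dependence: every $n$ of the $n+1$ coordinates must be algebraically independent. The coordinate $f_i$ is $\langle a_i,\bm g\rangle$ for the $i$-th row $a_i$ of $A$. Any $n$ rows of a generic $A$ give a generic projection $K^m\to K^n$. Such a projection restricted to the $n$-dimensional irreducible $V$ is dominant, because generic linear projections of an irreducible variety of dimension $n$ to $K^n$ are finite and hence surjective. So each $n$-subset of the $f_i$ is algebraically independent. What has to be checked is that a single random $A$ works for all $n+1$ subsets at once and simultaneously satisfies the APS-preservation condition. Each condition is the non-vanishing of a nonzero polynomial in the entries of $A$, $B$, $\bm c$. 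For instance, independence of $f_{i_1},\ldots,f_{i_n}$ holds iff some $n\times n$ minor of the Jacobian is nonzero, in characteristic zero or large characteristic.

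The step I expect to be the main obstacle is controlling the degree of the polynomial in $(A,B,\bm c)$ that witnesses failure, in small characteristic. There the Jacobian criterion fails and independence cannot be certified by a low-degree polynomial in the parameters. I would argue instead through degree bounds. The bad set is contained in a proper subvariety of the parameter space whose degree is at most $(\deg \bm g)^{O(n)}$, by Perron's theorem and B\'ezout-type bounds on annihilators. This suffices for the Schwartz--Zippel lemma once $|S|\ge 2^{\mathrm{poly}}$. Over a finite field this forces us to pass to an extension $\ell$ of polynomial degree, which explains the clause in the statement. Over $\Q$ we simply sample integers of polynomial bit-length. Finally, a union bound over the $n+2$ conditions gives success with high probability, and the resulting instance is computed in polynomial time since it is a linear transformation of the input.
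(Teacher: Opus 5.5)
\textbf{There is a genuine gap at your first step.} Your reduction treats the transcendence degree $r$ of the input as known: it fixes the number $r+1$ of output coordinates and the dimension of the affine section $\bm X=B\bm Z+\bm c$. You never say how $r$ is obtained, and this is not a detail.

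Over $\Q$ it can be repaired, since in characteristic zero $r$ is the generic rank of the Jacobian and can be found by random evaluation. Over a finite field it is precisely the hard part. The Jacobian criterion fails there, already for $(X_1^p,X_2^p)$, and even testing algebraic independence is only known to lie in $\AM\cap\coAM$ \cite{GSS19}. No randomised polynomial-time algorithm is known.

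Guessing $r$ does not help, because the target is a promise problem. If the guess $r'$ exceeds $r$, the $r'+1$ output polynomials have transcendence degree at most $r<r'$, so they are not minimally dependent. If $r'<r$, the generic affine section changes the image closure: for $\bm g=(X_1,X_2)$, a generic line misses the origin, so a positive instance becomes negative.

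The degree issue you single out as the main obstacle is not the real one. Once a good parameter point exists, the linear system expressing ``there is a nonzero annihilator of Perron-bounded degree'' has a maximal minor that is a nonzero polynomial in $(A,B,\bm c)$. It has degree $2^{\mathrm{poly}(N)}$ and vanishes on the bad set, which Schwartz--Zippel handles. But the existence of that good point relies on $r$ being the true transcendence degree.

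\textbf{The missing idea, which the paper uses, is to make the transcendence degree known in advance.} Adjoin the coordinates $Y,YX_1,\ldots,YX_s$ to the input. This preserves the APS status: project the new coordinates away in one direction, and in the other set $Y$ to a sufficiently high power of $\varepsilon$. It also forces the transcendence degree to equal the number $s+1$ of variables. A random projection onto $s+2$ coordinates via \cite[Theorem~4.6]{GSS19} then needs neither a transcendence-degree computation nor your variable-reduction step.

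The same trick removes the small-characteristic difficulty with certifying independence. At $\bm p=(Y,\bm X)=(1,\bm 0)$, the Jacobian rows of the adjoined coordinates form the identity matrix. Hence, for each $(s+1)$-subset $I$ of rows of the projection matrix $\bm A$, $\det(\bm A_I J_{\bm g}(\bm p))$ is a nonzero polynomial of degree $s+1$.

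The sufficient direction of the Jacobian criterion holds in every characteristic, by a minimal-degree-annihilator argument that uses that $\ell$ is perfect. So this certifies minimal dependence with a low-degree test rather than a $2^{\mathrm{poly}(N)}$-degree one.
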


The reduction first adjoins the polynomials $Y,YX_1,\ldots,YX_s$ to the input.
This preserves the APS status: a negative instance  remains negative after these extra polynomials are added, while
a positive instance stays positive by choosing $Y$ to be a sufficiently high power of $\varepsilon$.
The crucial point is that this modification makes the transcendence degree of the map equal to the number
$s+1$ of  variables.  A generic projection onto $s+2$ coordinates
then yields a minimally dependent map with high probability: the Jacobian 
associated with this projection has a nonzero minor in every characteristic,  
showing that every $(s+1)$-subtuple of polynomials is algebraically independent.
We give a  detailed proof  in Section~\ref{sec-algebaric}.

Throughout the remainder of the argument, we work with minimally
dependent instances of the APS problem, consisting of $n+1$ polynomials in $n$ variables.  
For such an instance $\bm f$, 
the image closure associated with every proper subset
of the coordinate polynomials contains the origin (since such a subset is algebraically independent). The remaining question
is whether the origin also belongs to the image closure of the full map
$\bm f$.

Since the transcendence degree of $f_1,\ldots,f_{n+1}$ is $n$, their
annihilator ideal is principal. Hence there is a unique irreducible
polynomial $H\in k[\bm Y]$, up to scaling by a nonzero constant, that
generates this ideal and satisfies
$H(f_1(\bm X),\ldots,f_{n+1}(\bm X))=0$.
We refer to $H$ as the \emph{irreducible annihilator} of the $f_i$'s. In this setting
the input instance is positive if and
only if $H$ vanishes at the origin. Indeed, the image closure is defined
by $H$, so
\[
\bm0\in\overline{\bm f(K^n)}
\quad\Longleftrightarrow\quad H(\bm0)=0.
\]
Moreover, since the annihilator ideal is generated by $H$, $H(\bm0)=0$ if and only if 
every polynomial relation among the $f_i$'s vanishes at the
origin. 
%The description remains valid after extending scalars to $K$;
%see Claim~\ref{claim:ONE}.

\paragraph{Prescribing the Algebraic Curve.}
We now describe the algebraic curve that we use as a witness for positive
APS instances. Consider the annihilator $H\in k[\bm Y]$, and suppose
$H(\bm0)=0$. Let $d$ be the least total degree of a monomial appearing in
$H$ and denote by $H_d$ the homogeneous degree-$d$
part of $H$. By assumption $H_d$ is not constant, so by renaming the input polynomials if necessary, 
we may assume that $Y_{n+1}$
appears in $H_d$.

By algebraic independence, the image closure of the first $n$ polynomials
$(f_1,\ldots,f_n):K^n\to K^n$ is $K^n$.
Roughly speaking, the idea is to prescribe infinitesimal values for these
polynomials while leaving the value of $f_{n+1}$ unspecified; that is, we set
\begin{equation}
\label{eq-imageprescribed}
f_i(\bm X)=a_i\varepsilon \qquad(\text{with }i\in[n]).
\end{equation}
For a generic choice of
$\bm a$, the system in~\eqref{eq-imageprescribed} has a nonempty
zero-dimensional set of solutions over~$\KK$.  We will show that 
when the $a_i$'s are  chosen at random from a sufficiently large finite
set $S \subseteq K$, 
these finiteness and nonemptiness conditions hold with high probability.

The above discussion is subject to our assumption that $Y_{n+1}$ appears in the least-degree
homogeneous component $H_d$ of the annihilator $H$.  Our algorithm will consider all $n+1$
possibilities for the choice of the polynomial that is left out of the system~\eqref{eq-imageprescribed}, i.e.,
the polynomial whose value is left unspecified.

Our main objective is to
show that, whenever the APS instance is positive, the $a_i$'s in~\eqref{eq-imageprescribed} can be
chosen so that among the finitely many solutions of the system there is some
$\bm x^*(\varepsilon)\in \KK^n$ satisfying
\begin{equation}
\label{eq-vallastf}
\val\bigl(f_{n+1}(\bm x^*(\varepsilon))\bigr)>0.
\end{equation}
Together, \eqref{eq-imageprescribed} and~\eqref{eq-vallastf} give an APS
witness, since all the $f_i$'s have positive valuation at $\bm x^*(\varepsilon)$.

The following proposition shows that the $a_i$'s that fail to satisfy the above desiderata
are roots of a polynomial of a controlled degree.
\begin{restatable}{proposition}{propchoiceaipositiveimage}
\label{prop-choice-ai-positive-image}
Let $\bm f=(f_1,\ldots,f_{n+1})\in k[\bm X]^{n+1}$ be minimally
dependent, and put
\[
\Delta:=\max_{i\in[n+1]}\deg f_i,
\qquad \Delta_{\mathrm{bad}}:=\Delta^{n+1}+n\Delta^n.
\]
There is a nonzero polynomial $Q_{\mathrm{bad}}\in k[Y_1,\ldots,Y_n]$ of degree at most
$\Delta_{\mathrm{bad}}-1$ such that, after a suitable permutation of the
$f_i$'s, for every $\bm a\in K^n$ with $Q_{\mathrm{bad}}(\bm a)\neq0$:
\begin{enumerate}
\item the set $V\subseteq\KK^n$ of common zeros of
$\{f_i-a_i\varepsilon:i\in[n]\}$ is finite and nonempty; and
\item $\bm f$ is a positive APS instance if and only if there exists
$\bm x(\varepsilon)\in V$ such that
$\val(f_{n+1}(\bm x(\varepsilon)))>0$.
\end{enumerate}
\end{restatable}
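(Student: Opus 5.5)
The permutation is chosen first. Let $H$ be the irreducible annihilator of $\bm f$. If $H(\bm0)\neq0$, any order works, because then no witness exists and item~2 holds trivially in both directions. If $H(\bm0)=0$, let $H_d$ be the lowest-degree homogeneous component of $H$ and renumber the $f_i$ so that $Y_{n+1}$ occurs in $H_d$. The direction ``$\Leftarrow$'' of item~2 is then immediate in all cases. If $\bm x\in V$ has $\val(f_{n+1}(\bm x))>0$, then since also $\val(f_i(\bm x))=\val(a_i\varepsilon)\ge1$ for $i\in[n]$, the point $\bm x$ is an APS witness. All the remaining work goes into the converse and item~1. For both, $Q_{\mathrm{bad}}$ will be a product of lowest-degree homogeneous parts of finitely many polynomials $G\in k[\bm Y]$. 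The mechanism is this: if $G$ has lowest homogeneous part $G_e$ and $G_e(\bm a)\neq0$, then $G(\bm a\varepsilon)=\varepsilon^eG_e(\bm a)+O(\varepsilon^{e+1})\neq0$. So the $\KK$-point $\bm a\varepsilon$ avoids every proper closed set $\{G=0\}$ in $K^n$ defined over $k$.

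\emph{Item 1.} Because $f_1,\ldots,f_n$ are algebraically independent, the map $\bm\phi=(f_1,\ldots,f_n)\colon K^n\to K^n$ is dominant and generically finite. I will produce one polynomial $G_1$, with $\deg G_1\le n\Delta^n$, outside of whose zero set the fibres of $\bm\phi$ are finite and nonempty. I would build it by homogenising: consider $f_i-Y_iX_0^{\deg f_i}$ and use the Macaulay resultant of the homogenised system together with a generic linear form, i.e.\ the $u$-resultant. Its coefficient of highest degree in the $u$-variables, or a suitable nonzero coefficient of it, should vanish whenever the fibre over $\bm y$ is infinite or has points escaping to $X_0=0$. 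Where this coefficient is nonzero, the fibre consists of exactly the Bezout-many points counted in the affine part, so it is finite and nonempty. The degree of this coefficient in $\bm Y$ is bounded by a sum of products of the $\deg f_i$, which gives $n\Delta^n$. Applying the mechanism above to $\bm y=\bm a\varepsilon$ over $\KK$ then gives item~1 whenever $(G_1)_{e}(\bm a)\neq0$.

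\emph{Item 2, ``$\Rightarrow$''.} Assume $H(\bm0)=0$. Set $h(Y):=H(\bm a\varepsilon,Y)\in K[\varepsilon][Y]$ and write $H_d(\bm a,Z)=\sum_j b_jZ^j$. The monomial $\varepsilon^{d-j}Y^j$ of $h$ has coefficient $b_j+O(\varepsilon)$, and every other monomial $\varepsilon^{i}Y^j$ of $h$ has $i+j>d$. If $\bm a$ is chosen so that some $b_j$ with $j\ge1$ is nonzero (another homogeneous condition on $\bm a$, of degree at most $\deg H$), then a Newton polygon argument shows that $h$ has a root $y^*$ with $\val(y^*)\ge1>0$. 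The case $h(0)=0$ is covered by taking $y^*=0$.

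It remains to lift $y^*$ to a point of $V$: I need some $\bm x\in V$ with $f_{n+1}(\bm x)=y^*$. I will show that for $\bm y'$ outside a proper closed set $B\subseteq K^n$ defined over $k$, every root of $H(\bm y',Y)$ equals $f_{n+1}(\bm x)$ for some $\bm x\in\bm\phi^{-1}(\bm y')$. First, $H$ is, up to a scalar, the minimal polynomial of $f_{n+1}$ over $k(f_1,\ldots,f_n)$. Second, the image of $(\bm\phi,f_{n+1})$ is constructible and dense in the irreducible hypersurface $\{H=0\}$. Third, the points of the hypersurface that are missed lie in a proper closed subset, and its projection to $K^n$, together with the zero set of the leading coefficient of $H$ in $Y_{n+1}$, is contained in a proper closed set $B=\{G_2=0\}$. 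Perron's bound $\deg H\le\Delta^n\cdot\Delta$ controls the degrees here. Taking $Q_{\mathrm{bad}}$ to be the product of the lowest homogeneous parts of $G_1$ and $G_2$ and of the coefficient condition above should stay within the budget $\Delta^{n+1}+n\Delta^n-1$.

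I expect this lifting step to be the main obstacle, specifically producing $G_2$ with an explicit degree bound. Missed points of the closure can arise from points of $\bm\phi^{-1}$ escaping to infinity. My first attempt would be to fold this into $G_1$: where the homogenised fibre has no points at infinity and the leading coefficient of $H$ is nonzero, every root of $H(\bm y',Y)$ should be attained by a continuity, or properness, argument. That would make $G_2$ essentially just the leading coefficient of $H$, with degree at most $\Delta^{n+1}$.
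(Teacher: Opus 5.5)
Your direction ``$\Leftarrow$'' of item~2 is fine, and so is your Newton-polygon argument for a root of $H(\bm a\varepsilon,Y)$ of valuation $\ge1$; that is essentially the paper's Claim~\ref{claim:TWO}. The genuine gap is in item~1 and, above all, in the lifting step, where the actual content of the proposition lies.

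The points at infinity of the homogenised fibre over $\bm y$ are the common zeros of the leading forms of $f_1,\ldots,f_n$, so they do not depend on $\bm y$. Three things follow. The condition ``no points at infinity'' cannot be arranged by any condition on $\bm y$. A $G_1$ that vanishes wherever this condition fails would be identically zero. And when these zeros form a positive-dimensional set, the $u$-resultant itself vanishes identically, so no coefficient of it can serve as $G_1$. An example of this is the algebraically independent triple $X_1X_2,\,X_1X_3,\,X_1^2+X_2$, whose leading forms all vanish on $X_1=0$.

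The leading coefficient of $H$ does not control lifting either. Take $\bm f=(X_1,\;X_1X_2^2-X_2,\;X_1X_2-1)$.
\begin{itemize}
\item It is minimally dependent, with $H=Y_1Y_2-Y_3^2-Y_3$.
\item It is a positive instance, with witness $(\varepsilon,\varepsilon^{-1})$.
\item Since $H_1=-Y_3$, $f_3$ must be placed last.
\item Your coefficient condition and the leading coefficient of $H$ in $Y_3$ are both the constant $-1$.
\end{itemize}
Yet for $\bm a=(0,a_2)$ with $a_2\neq0$ one gets $V=\{(0,-a_2\varepsilon)\}$, and $f_3=-1$ there. So item~2 fails. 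The positive-valuation root $0$ of $H(0,a_2\varepsilon,Y)=-Y^2-Y$ is carried by the fibre point that escapes to $[0:0:1]$, a point at infinity that lies over every $\bm y$.

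The abstract Chevalley argument does produce some proper closed set $B$, but with no degree bound. You would also still owe an argument that the behaviour proved at points of $K^n$ persists at the $\KK$-point $\bm a\varepsilon$. Your ``mechanism'' only shows that $\bm a\varepsilon$ avoids $\{G=0\}$.

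The missing idea is the paper's, and it needs no resultants.
\begin{itemize}
\item For each $j$, apply Perron's bound to $(f_1,\ldots,f_n,X_j)$. This gives a relation $P_j(\bm Y,U)$ of positive $U$-degree with leading coefficient $q_j$. Put $h=q_1\cdots q_n$, which has degree at most $n\Delta^n$.
\item Finiteness: if $h(\bm a)\neq0$, the $j$-th coordinate of any point of $V$ is a root of the nonzero polynomial $P_j(\bm a\varepsilon,U)$.
\item Lifting: after inverting $h(f_1,\ldots,f_n)$, the ring $K[\bm X]$ becomes integral over $K[f_1,\ldots,f_{n+1}]$. For any root $z^*\in\KK$ of $H(\bm a\varepsilon,Z)$, the map $f_i\mapsto a_i\varepsilon$, $f_{n+1}\mapsto z^*$ is well defined because $(H)$ is the whole kernel. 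The homomorphism-extension theorem (Proposition~\ref{prop-lang}) then extends it to $K[\bm X]$, giving $\bm x\in V$ with $f_{n+1}(\bm x)=z^*$. This works directly over $\KK$, with no transfer from $K$ needed.
\item In the example, $h=Y_1$ excludes exactly $a_1=0$.
\item Nonemptiness: $V\neq\emptyset$ follows once $H(\bm a\varepsilon,Z)$ has positive $Z$-degree. In the negative case the paper arranges this with a coefficient of the top component $H_{D_H}$.
\end{itemize}
Consequently $Q_{\mathrm{bad}}=Gh$ needs only one factor $G$ of degree at most $D_H-1$ besides $h$, and it fits the budget $\Delta^{n+1}-1+n\Delta^n$. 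By contrast, separate factors for the root condition and for a leading-coefficient $G_2$ would give a bound near $2\Delta^{n+1}$.
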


For a finite set $S\subseteq K$, the probability that a
uniformly chosen $\bm a\in S^n$ violates $Q_{\mathrm{bad}}(\bm a)\neq0$ is at most
$(\Delta_{\mathrm{bad}}-1)/|S|$. 
In case $k=\mathbb Q$ we choose $S$ to be a finite set of integers, whereas when $k=\mathbb F_q$ then 
we choose $S$ to be a subset of a suitable extension field $\ell$ of $k$.
We thereby obtain randomised polynomial-time sampling
with an arbitrarily small constant failure probability. The polynomials defining the bad choices
are only used to obtain this error bound; the algorithm does not compute them.

Before outlining the proof of
Proposition~\ref{prop-choice-ai-positive-image}, we introduce our running
example.
\begin{runningexample}
Consider the minimally dependent map $\bm f=(f_1,f_2,f_3)$, where
\begin{align*}
f_1&:=X_1(X_1+1),\\
f_2&:=X_2(X_1+1),\\
f_3&:=X_1(X_2+1).
\end{align*}
Its annihilator
\[
H:=(Y_1-Y_2)(Y_1-Y_3)-Y_1(Y_2-Y_3)^2
\]
has constant term zero, regardless of the choice of the field $k$.
Now prescribe $f_1=a_1\varepsilon$ and $f_2=a_2\varepsilon$, with $a_1\neq0$. 
Writing $\mathcal R:= \{ x \in \mathcal K:\val(x)\geq 0\}$ for the valuation ring of 
$\mathcal  K$, 
there is a solution
\[
\bm x^*(\varepsilon) \in\bigl(a_1\varepsilon+\varepsilon^2\mathcal R,\ a_2\varepsilon+\varepsilon^2\mathcal R\bigr)
\]
whose image under $f_3$ lies in $a_1\varepsilon+\varepsilon^2\mathcal R$ and has positive valuation.
The second solution is
\[
\bm x^+(\varepsilon) \in \bigl(-1-a_1\varepsilon+\varepsilon^2\mathcal R,\ -a_2/a_1+\varepsilon\mathcal R\bigr).
\]
Its image under $f_3$ lies in $a_2/a_1-1+\varepsilon\mathcal R$ and therefore has valuation zero
for generic $a_1,a_2$. 

Thus, after prescribing the values of $f_1$ and $f_2$, the possible values of
$f_3$ include two series, generically only one of which has positive
valuation. The question is how to show that a value of $f_3$ that has positive valuation
always exists for a generic choice of parameters in a positive instance of APS.
\end{runningexample}

\paragraph{Proof Outline of Proposition~\ref{prop-choice-ai-positive-image}.}
In the case of a positive instance of $\APS_k$,
we prove the existence of $\bm x(\varepsilon)$ satisfying~\eqref{eq-imageprescribed} and~\eqref{eq-vallastf}
using $H\in k[\bm Y]$ and fundamental facts from commutative algebra.
Consider $H$ evaluated at the prescribed values of the first $n$
polynomials:
\begin{equation}
\label{eq-Himages}
H(a_1\varepsilon,\ldots,a_n\varepsilon,Z).
\end{equation}
We want to show that this polynomial has a root $z^*(\varepsilon)$ of positive
valuation. Recall that $d$ is the least total degree of a monomial occurring in $H$.
For a fresh indeterminate $U$, define
\[
H^{\circ}(\varepsilon,U):=\varepsilon^{-d}H(a_1\varepsilon,\ldots,a_n\varepsilon,\varepsilon U)\in K[\varepsilon][U].
\]
Then $H^{\circ}(0,U)=H_d(a_1,\ldots,a_n,U)$, which is nonconstant in $U$ for a
generic choice of $\bm a$, since $Y_{n+1}$ occurs in $H_d$.
A valuation argument shows that $H^{\circ}(\varepsilon,U)$, for generic choice of~$\bm{a}$, has a root~$u^*(\varepsilon)$ of
nonnegative valuation. Consequently $z^*(\varepsilon):=\varepsilon u^*(\varepsilon)$ is a root of
\eqref{eq-Himages} of strictly positive valuation. Our argument uses a
coefficient of a positive power of $Y_{n+1}$ in $H_d$,  and  also applies in positive characteristic;
see Claim~\ref{claim:TWO} for the detailed proof.

The preceding argument gives a positive-valuation root of the specialised
annihilator. We next show that every such root occurs as the image of
$f_{n+1}$ at a solution of~\eqref{eq-imageprescribed}. For this, we use the
following result on extension of homomorphisms along integral ring extensions; see
\cite[Chapter~VII, Proposition~3.1]{lang2012algebra}.  Recall here that
an element $b$ of a ring $B$ is integral over a subring~$A$ if it is a root of a monic
polynomial with coefficients in~$A$, that is, if
$b^d+\sum_{i=0}^{d-1}a_ib^i=0$ for some $a_0,\ldots, a_{d-1}\in A$. The ring $B$ is
integral over $A$ if every element of $B$ is integral over $A$.
Intuitively, the elements of $B$ are algebraically constrained by those
of $A$.  We now have:

\begin{proposition}
\label{prop-lang}
Let $A$ be a subring of a commutative ring $B$ with identity, and assume
that $B$ is integral over $A$. Let $\varphi:A\to L$ be a homomorphism to
an algebraically closed field $L$. Then there exists a
homomorphism $\widetilde \varphi: B\to L$ such that the following diagram commutes:
\begin{align*}
A&\hookrightarrow B\\
\varphi\,\downarrow&\qquad\downarrow\,\widetilde\varphi\\
L&~=~L
\end{align*}
\end{proposition}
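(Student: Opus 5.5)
The plan is the classical argument via prime ideals and lying-over, with the algebraic closedness of $L$ used only at the very end.

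\textbf{Step 1: pass to a prime ideal.} Let $\mathfrak p:=\ker\varphi$. Since $\varphi(A)$ is a subring of the field $L$, it is an integral domain, so $\mathfrak p$ is a prime ideal of $A$.

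\textbf{Step 2: extend the homomorphism to the localisation.} Localise at $S:=A\setminus\mathfrak p$. Then $B_{\mathfrak p}:=S^{-1}B$ is integral over the local ring $A_{\mathfrak p}$, and $A_{\mathfrak p}$ injects into $B_{\mathfrak p}$ because localisation is exact. Since every element of $S$ maps to a nonzero element of the field $L$, the map $\varphi$ extends uniquely to $\varphi_{\mathfrak p}:A_{\mathfrak p}\to L$. Its kernel is the maximal ideal $\mathfrak pA_{\mathfrak p}$. It therefore suffices to extend $\varphi_{\mathfrak p}$ to $B_{\mathfrak p}$ and then compose with $B\to B_{\mathfrak p}$; the resulting square commutes by construction. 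So we may assume from now on that $A$ is local with maximal ideal $\mathfrak m=\ker\varphi$.

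\textbf{Step 3: find a maximal ideal of $B$ over $\mathfrak m$ (the main obstacle).} I would first show that $\mathfrak mB\neq B$, using a Nakayama-type argument. Suppose $1=\sum_i m_ib_i$ with $m_i\in\mathfrak m$ and $b_i\in B$. Let $B_0:=A[b_1,\dots,b_r]$; this is a finitely generated $A$-module, since each $b_i$ is integral over $A$. Then $\mathfrak mB_0=B_0$, and Nakayama's lemma over the local ring $A$ gives $B_0=0$, contradicting $1\neq0$. Consequently $\mathfrak mB$ lies in some maximal ideal $\mathfrak M$ of $B$, by Zorn's lemma.

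Next, $\mathfrak M\cap A$ is a proper ideal containing $\mathfrak m$, hence equals $\mathfrak m$. This gives an inclusion of fields $A/\mathfrak m\hookrightarrow B/\mathfrak M$, and $B/\mathfrak M$ is algebraic over $A/\mathfrak m$ because integrality passes to quotients.

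\textbf{Step 4: embed the residue field into $L$.} The map $\varphi$ induces a field embedding $A/\mathfrak m\hookrightarrow L$. Since $B/\mathfrak M$ is an algebraic extension of $A/\mathfrak m$ and $L$ is algebraically closed, this embedding extends to an embedding $\sigma:B/\mathfrak M\to L$. This is the standard extension theorem for embeddings into algebraically closed fields, proved via Zorn's lemma, adjoining one root at a time.

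\textbf{Step 5: assemble.} Set $\widetilde\varphi:=\sigma\circ(B\to B/\mathfrak M)$. On $A$ it agrees with $A\to A/\mathfrak m\to L$, which is $\varphi$, so the diagram in the statement commutes.
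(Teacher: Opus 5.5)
Your argument is correct. The paper gives no proof of its own and simply cites Lang, Algebra, Chapter~VII, Proposition~3.1, whose proof is the same as yours: localise at $\mathfrak p=\ker\varphi$, find a maximal ideal of $B_{\mathfrak p}$ lying over $\mathfrak pA_{\mathfrak p}$ (your Nakayama argument is the lying-over step), and extend the induced residue-field embedding into the algebraically closed field $L$.
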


We apply this proposition to the rings
\[
%A:=K[f_1,\ldots,f_n],\qquad
B:=K[f_1,\ldots,f_n,f_{n+1}]\quad\text{and}\quad
C:=K[X_1,\ldots,X_n].
\]
For any root $z^*(\varepsilon)$ of~\eqref{eq-Himages}, there is a unique 
$K$-algebra  homomorphism $\varphi:B\to\KK$ such that
\begin{gather}
\varphi(f_i)=a_i\varepsilon \quad(\text{with }i\in[n]),
\qquad \varphi(f_{n+1})=z^*(\varepsilon)\, .
\label{eq:def-phi}
\end{gather}
Indeed, all polynomial
relations among $f_1,\ldots,f_{n+1}$ are multiples of~$H$.  Every such relation is preserved by 
the assignments in~\eqref{eq:def-phi} that define $\varphi$, since
\[ \varphi(H(f_1,\ldots,f_{n+1}))=
H(a_1\varepsilon,\ldots,a_n\varepsilon,z^*(\varepsilon))=0 \, .\]

We cannot directly apply Proposition~\ref{prop-lang} to extend $\varphi$ along the inclusion $B\subseteq C$
since $C$ need not be integral over $B$. We use localisation to overcome
this obstacle, that is, we formally add an inverse of a certain element of $B$.
Since $f_1,\ldots,f_n$ are algebraically independent,
$k(\bm X)$ is a finite algebraic extension of the field $k(f_1,\ldots,f_n)$; therefore each $X_j$  satisfies a
monic equation with coefficients in $k(f_1,\ldots,f_n)$. Let  polynomial~$s\in k[Y_1,\ldots,Y_n]$  be a
nonzero polynomial that clears all the
denominators of such equations.  We have $s(f_1,\ldots,f_n)\neq0$ by algebraic independence of polynomials~$f_1,\ldots,f_n$.

We localise the rings  $B$ and $C$ by inverting $s(f_1,\ldots,f_n)$, and denote
the resulting rings by $B_s$ and $C_s$, respectively. Recall that the elements of the localisation are formal fractions whose denominators 
are powers of $s(f_1,\ldots,f_n)$.
Notice that every $X_j$ now satisfies a monic
equation over~$B_s$, so~$C_s$ is integral over~$B_s$.  Moreover,
if $s(a_1\varepsilon,\ldots,a_n\varepsilon)\neq0$, the homomorphism $\varphi$ extends from $B$ to
$B_s$ in a natural way, and then Proposition~\ref{prop-lang} further extends this map to~$\widetilde\varphi:C_s\to\KK$.

Define $\bm{x}^*(\varepsilon) \in \KK^n$ by $x_j^*(\varepsilon):=\widetilde\varphi(X_j)$ for all $j\in[n]$.
Then $\widetilde\varphi$ acts on a rational function $f \in C_s$ by evaluating it 
at $\bm{x}^*(\varepsilon)$.  
Since 
$\widetilde\varphi$ agrees with $\varphi$ on $B$, we obtain
\[ f_i(\bm x^*(\varepsilon)) = \widetilde\varphi (f_i) = \varphi(f_i)= a_i \varepsilon \quad (\text{with }i\in[n])
\]
and 
\[ f_{n+1}(\bm x^*(\varepsilon)) = \widetilde\varphi (f_{n+1}) = \varphi(f_{n+1})= z^*(\varepsilon)\, .\]
Thus, for generic $\bm a$, the roots of
$H(a_1\varepsilon,\ldots,a_n\varepsilon,Z)$ are exactly the images under $f_{n+1}$ of the
solutions of the prescribed system. 

\paragraph{Multivariate Resultants and $f_{n+1}$-Images.}
Assume that a generic choice of $\bm a$ has been fixed, so the system
$\{f_i-a_i\varepsilon :i\in[n]\}$ has a non-empty finite set $V$ of solutions and, for a positive
APS instance, at least one solution gives $f_{n+1}$ positive valuation.
Since $V$ may have cardinality exponential in the size of the above system, in order to verify the latter condition
we need a succinct representation
of the values of $f_{n+1}$ at all of the points of $V$.
To this end, we construct a polynomial $R$ in a single variable $Z$ whose roots are the values 
of $f_{n+1}$ on $V$. We will use the facts on resultants in~\cite{AGS26} to access
the coefficients of $R$ in $\CH$.  This will allow us to determine whether at least one of the roots 
of $R$ has strictly positive valuation.

Write $G_i(X_0,\ldots,X_n)$ for the homogenisation, in the $\bm X$
variables, of $f_i-a_i\varepsilon$, for all~$i\in[n]$. Write $F$ for the
homogenisation of $f_{n+1}$, and $d$ for its degree. The coefficients of
these forms lie in~$\ell[\varepsilon]$, where~$\ell$ is the field extension of $k$
generated by $a_1,\ldots,a_n$.
If the projective zero set of the~$G_i$'s has no point at infinity,
the Poisson formula~\cite[Theorem 3.4]{cox1992ideals} expresses the resultant
$R$ of the $G_i$'s and $ZX_0^d-F$ as
\begin{equation}
\label{eq-Rz}
R(Z)=c\prod_{\bm x(\varepsilon)\in V}
\bigl(Z-f_{n+1}(\bm x(\varepsilon))\bigr)^{m_{\bm x}},
\qquad c\in\KK^\times,
\end{equation}
where the positive integers $m_{\bm x}$ record the multiplicity of $\bm x$ in $V$.

As noted in~\cite{AGS26}, points at infinity can cause this
resultant to vanish identically. To repair this, we use the generalised
characteristic polynomial construction of~\cite{Ierardi89}, in a form
similar to~\cite{AGS26}. First perturb the homogenised polynomials $G_i$ with
a new indeterminate $T$:
\[
\widetilde G_i:=G_i+TX_i^{d_i},\qquad \text{where } d_i:=\deg_{X_0,\ldots,X_n}G_i.
\]
For a further indeterminate $W$, put
\[
L:=ZX_0^d+\sum_{i=1}^nW^iX_i^d-F
\]
and define
\begin{equation}
\label{eq-R0}
R_0(T,W,\varepsilon,Z):=\Res(\widetilde G_1,\ldots,\widetilde G_n,L)
\end{equation}
where the resultant is taken with respect to $X_0,\ldots,X_n$.

For a non-zero polynomial $P=\sum_{j=0}^d p_jU^j$ in an indeterminate $U$, we write $\TT_U P$
for the trailing coefficient, that is, the coefficient $p_{j_0}$ where $j_0:=\min\{j:p_j\neq 0\}$.
By convention we define~$\TT_U(0)=0$.

We will show that
\begin{gather}
R(Z)=R_2(\varepsilon,Z):=\TT_W\TT_T R_0(T,W,\varepsilon,Z)    
\label{eq:iterated}
\end{gather}
has the form~\eqref{eq-Rz}.

By~\cite[Lemma~2.6]{Ierardi89} and
\cite[Lemmas~4.6--4.7]{AGS26}, the $T$-trailing coefficient is a product
of factors obtained by evaluating $L$ at a finite set of projective
limit points of the perturbed system. Every isolated affine solution of
$\{f_i-a_i\varepsilon :i\in[n]\}$ occurs among these points. At an affine point~$[1:x_1:\cdots:x_n]$, the corresponding factor is a nonzero scalar
multiple of
\[
Z+\sum_{i=1}^nW^ix_i^d-f_{n+1}(\bm x),
\]
whose $W$-trailing coefficient is $Z-f_{n+1}(\bm x)$.
At {a point at infinity $[0:x_1:\cdots:x_n]$, the factor is a nonzero scalar multiple of}
\[
\sum_{i=1}^nW^ix_i^d-F(0,x_1,\ldots,x_n).
\]
It is a nonzero polynomial in $W$, since the $W^i$'s are distinct and
some $x_i\neq0$. Its $W$-trailing coefficient is a nonzero scalar,
independent of $Z$.
%\textcolor{red}{Mahsa: next sentence needs improvement, it is a bit out of blue}
%No separability assumption is used here: the resultant factorisation is taken with multiplicity, so the argument remains valid in arbitrary characteristic, including at inseparable points.

\begin{runningexample}
The homogenisations of $f_1-a_1\varepsilon$ and $f_2-a_2\varepsilon$ are
\begin{align*}
G_1&=X_1^2+X_0X_1-a_1\varepsilon X_0^2,\\
G_2&=X_1X_2+X_0X_2-a_2\varepsilon X_0^2.
\end{align*}
There is a common root  $[0:0:1]$ of $G_1$ and $G_2$ at infinity. The homogenisation
$F=X_1X_2+X_0X_1$ of $f_3$ also vanishes there, so
$\Res(G_1,G_2,ZX_0^2-F)$ is identically zero. Our approach considers the resultant of 
the perturbed polynomials $\widetilde G_1,\widetilde G_2$ and
\[
L=ZX_0^2+WX_1^2+W^2X_2^2-F.
\]
For generic $a_1,a_2$, the point at infinity has multiplicity two, and
\begin{align*}
\TT_T\Res(\widetilde G_1,\widetilde G_2,L)
={}&cW^4
\bigl(Z+W(x_1^*)^2+W^2(x_2^*)^2-f_3(\bm x^*)\bigr)\\
&\hspace{12mm}\cdot
\bigl(Z+W(x_1^+)^2+W^2(x_2^+)^2-f_3(\bm x^+)\bigr)
\end{align*}
for some $c\in\KK^\times$. Taking the $W$-trailing coefficient gives
$c\bigl(Z-f_3(\bm x^*)\bigr)\bigl(Z-f_3(\bm x^+)\bigr)$, as required.
\end{runningexample}

\paragraph{$\varepsilon$-Valuations of $R(Z)$ and $R(0)$.}
It remains to determine whether $R(Z)$ has a root of positive valuation.
%\textcolor{red}{Mahsa: do we name this gauss valuation? if yes, we need to drop a word about its relation with $\val$?}
For a polynomial $P\in\KK[Z]$, define its \emph{Gauss valuation}
to be the minimum $\val$-valuation of its coefficients. 
Overloading notation, we denote the Gauss valuation by $\val(P)$, with
$\val(0):=\infty$. This valuation is multiplicative: after scaling
two nonzero polynomials to have valuation zero, their reductions are
nonzero polynomials over a residue field, whose product is nonzero.

Write
\begin{equation}
\label{eq-Rzval}
R(Z)=\sum_{i=0}^{D_R}c_i(\varepsilon)Z^i,
\end{equation}
where $D_R$ is the number of factors in~\eqref{eq-Rz}, counting multiplicity.
Then $R$ has a root of positive valuation if and only if the Gauss valuation of $R$ is strictly less than the 
valuation of the constant term of $R$:
\begin{equation}
\label{eq-R-val}
\val(R)<\val(R(0))\, .
%\qquad
%\val(R(Z))=\min_{0\leq i\leq D_R}\val(c_i(\varepsilon)).
\end{equation}
Indeed, write $R(Z)=c_{D_R}\prod_{i=1}^{D_R}(Z-z_i^*)$ over $\KK$.
Multiplicativity gives
\[
\val(R)=\val(c_{D_R})+
\sum_{i=1}^{D_R}\min\{0,\val(z_i^*)\}.
\]
If $R(0)=0$, then $0$ is a root of valuation $\infty$, and
\eqref{eq-R-val} holds since $R\neq0$. Otherwise all roots are nonzero,
and
\begin{align*}
\val(R(0))-\val(R)
&=\sum_{i=1}^{D_R}
\bigl(\val(z_i^*)-\min\{0,\val(z_i^*)\}\bigr)\\
&=\sum_{i=1}^{D_R}\max\{0,\val(z_i^*)\}.
\end{align*}
This proves the  relation given in~\eqref{eq-R-val}. 

It remains to show how the condition~\eqref{eq-R-val} can be seen as an instance of the MRV problem.
The issue is that the polynomial $R$ is not directly defined as a resultant.  Rather, $R$ 
was defined in~\eqref{eq:iterated} as an iterated trailing coefficient of a resultant.
To address this,
 let $D_0$ bound the degrees of $R_0$ in $T,W,$ and $\varepsilon$, put $m:=D_0+1$, and define
\[
\widehat R_0(\varepsilon,Z)
:=R_0(\varepsilon^{m^2},\varepsilon^m,\varepsilon,Z).
\]
Using the fact that the exponent $am^2+bm+j$ of $\varepsilon$ uniquely encodes the triple $(a,b,j)$ in lexicographic order, Lemma~\ref{lem-weighted-valuation} proves the equivalence
\[
\val(R_2(\varepsilon,Z))<\val(R_2(\varepsilon,0))
\quad\Longleftrightarrow\quad
\val(\widehat R_0(\varepsilon,Z))<\val(\widehat R_0(\varepsilon,0)).
\]
Since $\widehat R_0$ is  the resultant of the forms defining $R_0$ after the substitutions $T=\varepsilon^{m^2}$ and $W=\varepsilon^m$, the right-hand side is an MRV query, with distinguished monomial $Z^0$. This completes  the reduction in Theorem~\ref{theo-main-reduc}.

\paragraph{The MRV problem in the Counting Hierarchy.}
In Section~\ref{sec-algorithm} we show that the MRV problem lies in the counting hierarchy.  The key tool is~\cite[Theorem~6.7]{AGS26}, which implies that determining whether a specified coefficient of a resultant vanishes is in $\CH$.

To see how this can be used to solve an instance of MRV, write the resultant as
\[
 R(\varepsilon,\bm Y)
=\sum_{\bm\beta,j}c_{\bm\beta,j}\varepsilon^j\bm Y^{\bm\beta}.
\]
For the distinguished monomial $\bm Y^{\bm\beta_0}$, the strict inequality in the definition of MRV can be written
\begin{gather}
\exists\bm\beta\;\exists j:
c_{\bm\beta,j}\neq0
\;\text{and}\;
c_{\bm\beta_0,j'}=0\; \text{for every }0\leq j'\leq j.
\label{eq:cond}
\end{gather}
Indeed, the nonzero coefficient $c_{\bm\beta,j}$ witnesses that the minimum valuation of the resultant is at most $j$, while the universal vanishing condition on $c_{\bm\beta_0,j'}$ says that the distinguished coefficient has valuation strictly greater than $j$. Standard resultant degree bounds give singly exponential bounds on the relevant exponents, so all indices have polynomial-length binary encodings.  Since the MRV problem involves a fixed quantifier alternation over coefficient tests, it also lies in $\CH$. This proves~$\mathrm{MRV}_k\in\CH$.

\section{Encoding and Sampling Conventions}
\label{sec-prelim}
In this section we discuss the representation of field elements and polynomials, as well as the random-sampling conventions used in the complexity bounds below.

\paragraph{Field elements.}
We denote elements of $\mathbb Q$ by pairs of
binary integers. A finite field $\F_q$, where $q=p^e$, is represented as
$\F_p[U]/(\mu(U))$, with $p$ prime and $\mu$ a specified monic irreducible
polynomial of degree $e$. Writing $u$ for the residue class of $U$,
a field element is represented in the power basis $1,u,\ldots,u^{e-1}$. 
We denote the total input length by $N$. Thus the number of
variables and monomials, the coefficient bit lengths, and the logarithms
of the degrees are polynomially bounded in $N$. Over $\Q$, clearing the
denominators of each input polynomial does not change its APS status.
We also assume that all polynomials in an instance of $\APS_k$ are non-constant:
identically zero polynomials may be deleted without changing the APS status, 
while a nonzero constant polynomial makes
the instance negative. 

\paragraph{Polynomials.}
Unless otherwise stated, polynomials are represented sparsely, as lists of monomials
with exponents written in binary. In particular, this is the
representation used for instances of the MRV problem.  

The following proposition shows that the APS problem is robust with respect to the encoding of polynomials:
there is a polynomial-time reduction of the version of the problem in which polynomials are encoded 
as arithmetic circuits to the version with the so-called dense encoding, where 
polynomials are represented as arrays of coefficients.  Thus the versions with circuit, sparse, and dense encodings
of the input
are all polynomial-time equivalent.

\begin{proposition}
\label{prop:dense-representations}
Let \(k\) be either \(\mathbb Q\) or a finite field. 
The version of \(\APS_k\) in which the input polynomials are represented by
division-free arithmetic circuits can be reduced in polynomial time to the
version with a dense encoding of the input polynomials.
\end{proposition}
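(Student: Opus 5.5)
The plan is the standard gate-by-gate introduction of auxiliary variables, in the spirit of the Tseitin transformation. The only APS-specific work is checking that the new constraints do not create spurious approximate solutions. Let $C_1,\ldots,C_m$ be the input circuits over $\bm X=(X_1,\ldots,X_n)$, with gates $g$. Large constants can be produced by repeated squaring, so a constant gate may carry a long binary encoding; we keep such constants as coefficients, which the dense encoding allows.

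\textbf{Step 1: the construction.} For every non-input gate $g$ we introduce a fresh variable $V_g$ and a defining polynomial $h_g$:
\begin{itemize}
\item $h_g := V_g - (V_{g_1}+V_{g_2})$ for an addition gate;
\item $h_g := V_g - c\,V_{g_1}$ for a scalar multiplication;
\item $h_g := V_g - V_{g_1}V_{g_2}$ for a multiplication gate;
\item $h_g := V_g - c$ for a constant gate.
\end{itemize}
Input gates use $V_{X_j}:=X_j$. Every $h_g$ has degree at most $2$ in a bounded number of variables, so its dense encoding has constant size. The new instance is $\bm f' := (h_g)_g \cup (V_{\mathrm{out}_i})_{i\in[m]}$, which has polynomial size.

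\textbf{Step 2: equivalence, via the valued-field formulation.} If $\bm x^*(\varepsilon)$ witnesses that $\bm f$ is positive, set $v_g:=$ the value of gate $g$ at $\bm x^*$. Then every $h_g$ vanishes exactly, and $\val(v_{\mathrm{out}_i})=\val(f_i(\bm x^*))>0$, so $\bm f'$ is positive.

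For the converse, suppose $(\bm x,\bm v)$ makes every coordinate of $\bm f'$ have positive valuation. The difficulty is that $h_g$ is only infinitesimal, not zero, so errors could accumulate along the circuit. Positive valuation is not closed under multiplication by elements of negative valuation, so a propagation argument would fail.

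\textbf{Main obstacle and its resolution: use the annihilator characterisation instead.} Suppose $\bm f$ is negative. Then there is a polynomial $A$ with $A(\bm0)=1$ and $A(f_1,\ldots,f_m)=0$. Each $f_i$ equals $V_{\mathrm{out}_i}$ modulo the ideal $I=(h_g)_g$, since the $h_g$ let us rewrite every gate variable as the polynomial it computes. Hence $A(V_{\mathrm{out}_1},\ldots,V_{\mathrm{out}_m})\in I$.

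We need this as an identity in the subalgebra generated by the coordinates of $\bm f'$, not merely in the ideal. The tool is a triangular change of variables. The map
\[
(\bm X,\bm V)\mapsto (\bm X,(h_g)_g)
\]
is an automorphism of the polynomial ring, because $h_g = V_g - (\text{polynomial in earlier variables})$. Under this automorphism, $V_{\mathrm{out}_i}$ becomes $f_i(\bm X)$ plus a polynomial in the new variables $H_g:=h_g$ with zero constant term in the $H$'s.

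Now evaluate on the valued-field witness $(\bm x,\bm v)$. Let $\eta_g:=h_g(\bm x,\bm v)$, so each $\eta_g$ has positive valuation. Unwinding the automorphism expresses the point as $\bm x$ together with $\bm v = \bm\Phi(\bm x,\bm\eta)$. The issue is whether $\bm x$ might have negative valuation, which would blow up the correction terms. To settle this, fall back on the algebraic route: $\bm f'$ is negative if $1$ lies in the $k$-algebra generated by its coordinates. Write
\[
1 = A(f_1,\ldots,f_m) + 1.
\]
It remains to exhibit $1-A(\bm V_{\mathrm{out}})$ as a polynomial in the coordinates of $\bm f'$, which does not follow from the ideal membership alone.

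A cleaner alternative is therefore to use the Zariski closure directly. Since $\bm V$ is determined by $(\bm X,\bm h)$ polynomially, the image of $\bm f'$ is $\{(\bm\eta,\bm\Psi(\bm x,\bm\eta))\}$, where
\[
\Psi_i(\bm x,\bm\eta) = f_i(\bm x) + (\text{terms divisible by some }\eta_g).
\]
The fibre over $\bm\eta=\bm0$ is exactly $\{\bm0\}\times\bm f(K^n)$. This does not immediately give closure-equality either.

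I would first try to verify the closure equality $\overline{\im\bm f'}\cap\{\bm\eta=0\}\ni\bm0 \iff \bm0\in\overline{\im\bm f}$, expecting the irreducibility of the graph-type variety to be the key. If it fails, I would fall back to a different encoding that avoids approximate gate constraints: a degree-doubling Kronecker-style expansion that computes dense coefficients in time polynomial in the degree. This is feasible because the input length already bounds the logarithm of the degree, and the target dense encoding is allowed to be large.
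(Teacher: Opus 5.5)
\textbf{There is a genuine gap: the proposal never reaches a working reduction.} You diagnose the obstacle correctly in Step~2: an infinitesimal constraint error $h_g$ can later be multiplied by an element of negative valuation. But the routes you then sketch cannot close it. The plain gate-by-gate system $\bm f'=(h_g)_g\cup(V_{\mathrm{out}_i})_i$ is simply not APS-equivalent to the circuit instance. So no annihilator or closure argument about $\bm f'$ can succeed, and the closure equality you hoped to verify is false.

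For a counterexample, let a circuit compute $g_1\cdot Y-Y+1$, where $g_1$ is a constant gate with value $1$. It computes the nonzero constant $1$, so it is a negative instance. In $\bm f'$, put $Y=\varepsilon^{-1}$ and $v_{g_1}=1-\varepsilon$, and compute all other gate values exactly from these. Then $v_{g_1}Y=\varepsilon^{-1}-1$, the output value is $0$, and the only nonzero constraint is $h_{g_1}=-\varepsilon$. Hence $\bm f'$ is a positive instance.

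Your fallback fails as well. A circuit of size $s$ can compute a polynomial of degree $2^s$. Even polynomial degree in $n$ variables gives exponentially many dense coefficients. So expanding the circuits is not polynomial time: a polynomial-time reduction must output a polynomial-size instance, and the dense target is not ``allowed to be large''.

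\textbf{The missing idea is a valuation-amplification gadget.} It forces each gate constraint to hold to a precision that beats the worst blow-up along the circuit. The paper keeps your gate equations $R_g$ and adds, for every variable $X_j$ (inputs and gate variables alike), a squaring chain $Y_{j,1}-X_j^2$ and $Y_{j,\ell+1}-Y_{j,\ell}^2$ for $1\le\ell<h$. Here $2^h>L$, where $L$ bounds the multiplicative depth $\mu(g)$. It also adds all products $Y_{j,h}R_g$.

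The forward direction is unchanged, since all these equations can be satisfied exactly. For the converse, set $q:=-\min(0,\val(x_1),\ldots,\val(x_m))$. If $\val(x_{j_0})=-q<0$, the approximate squaring chain forces $\val(y_{j_0,h})=-2^hq$. Then $\val(y_{j_0,h}R_g(\bm x))>0$ gives $\val(R_g(\bm x))>2^hq>Lq$ for every gate $g$.

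Let $z_g$ be the witness value of the gate variable, $c_g$ the true value of $g$ at the input coordinates $\bm w$, and $e_g:=z_g-c_g$. Induction over the circuit shows $\val(e_g)>(L-\mu(g))q$. At a product gate $g=ab$ one writes $e_g=R_g+z_ae_b+z_be_a-e_ae_b$ and uses $\val(z_a),\val(z_b)\ge-q$. At each output this gives $\val(e_{o_i})>0$, hence $\val(C_i(\bm w))>0$. All constructed polynomials have degree at most $3$, so their dense encodings have polynomial size.
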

\begin{proof}
Let \(C_1,\ldots,C_r\) be division-free arithmetic circuits in variables
\(\bm W=(W_1,\ldots,W_t)\). We may assume that every internal gate has
fan-in two and is labelled by \(+\) or \(\times\). By adding
trivial gates, if necessary, we assume also that the output gate is not a leaf.

Introduce a variable \(Z_g\) for every internal gate \(g\). For a gate
\(g\) with children \(a,b\), define
\[
R_g:=
\begin{cases}
Z_g-Z_a-Z_b, &\text{if }g=a+b,\\
%Z_g-Z_a+Z_b, &\text{if }g=a-b,\\
Z_g-Z_aZ_b,  &\text{if }g=ab.
\end{cases}
\]
In the above equations, if either $a$ or $b$ is a leaf then 
$Z_a$ and $Z_b$ stand for the corresponding input variable 
from $\bm W$ or a field element.  Let
\(\bm X=(\bm W,\bm Z)=(X_1,\ldots,X_m)\) denote the tuple of
all input and gate variables.

For a gate \(g\), let \(\mu(g)\) be the maximum number of multiplication gates on a directed path starting from a leaf and ending at \(g\), and let
\(L:=\max\{1,\max_g\mu(g)\}\). Put
\(h:=\lceil\log_2(L+1)\rceil\), so that \(2^h>L\).
For every \(j\in[m]\), introduce auxiliary variables
\(Y_{j,1},\ldots,Y_{j,h}\), and consider the following system:
\begin{align}
Y_{j,1}-X_j^2&=0
&& (j\in[m]), \label{eq:circuit-amp-1}\\
Y_{j,\ell+1}-Y_{j,\ell}^2&=0
&& (j\in[m],\ 1\leq \ell<h), \label{eq:circuit-amp-2}\\
R_g(\bm X)&=0
&& (g\text{ an internal gate}), \label{eq:circuit-amp-3}\\
Y_{j,h}R_g(\bm X)&=0
&& (j\in[m],\ g\text{ an internal gate}), \label{eq:circuit-amp-4}\\
Z_{o_i}&=0
&& (i\in[r]), \label{eq:circuit-output}
\end{align}
where \(o_i\) is the output gate of \(C_i\). We will show that this is an
equivalent instance of \(\APS_k\).

Suppose that the original circuit instance has an approximate zero
\(\bm w\in\KK^t\). Assign each gate variable \(Z_g\) the 
value computed by gate \(g\) at \(\bm w\). Then
equation~\eqref{eq:circuit-amp-3} holds exactly. 
Choose values of the auxiliary variables so that
the equations
\eqref{eq:circuit-amp-1}--\eqref{eq:circuit-amp-2} also hold exactly.
Since \(Z_{o_i}=C_i(\bm w)\) has positive valuation for every \(i\),
the constructed system has an approximate zero.

Conversely, let \((\bm x,\bm y)\) be an approximate zero  of
the system
\eqref{eq:circuit-amp-1}--\eqref{eq:circuit-output}, and put
\[q:=-\min(0,\val(x_1),\ldots,\val(x_m)).\] 
We show that the original circuit instance of APS is approximately satisfiable.   
To this end,
we first claim that
\(\val(R_g(\bm x))>Lq\) for every gate \(g\).

If \(q=0\), the claim follows immediately from
\eqref{eq:circuit-amp-3}. Suppose therefore that \(q>0\), and choose
\(j_0\in[m]\) such that \(\val(x_{j_0})=-q\). Approximate satisfaction of
the squaring equations gives, by induction,
\(\val(y_{j_0,\ell})=-2^\ell q\) for \(1\leq\ell\leq h\).
%Indeed, the assertion holds for \(\ell=1\) because
%\(\val(x_{j_0}^2)=-2q<0\), while
%\(\val(y_{j_0,1}-x_{j_0}^2)>0\); the inductive step is identical.
In particular, \(\val(y_{j_0,h})=-2^hq\). Hence
\eqref{eq:circuit-amp-4} gives
\[0<\val\bigl(y_{j_0,h}R_g(\bm x)\bigr)=-2^hq+\val(R_g(\bm x)),
\]
and therefore \(\val(R_g(\bm x))>2^hq>Lq\).  This establishes the claim.

Let \(c_g\)
denote the value computed by gate \(g\) when the original circuit is
evaluated at the \(\bm W\)-coordinates of \(\bm x\). Define \(z_g\) when \(g\) is a leaf by the value of the corresponding input variable or field constant. For every gate put \(e_g:=z_g-c_g\). Notice that \(e_g:=0\) for input and constant gates. We
claim by induction over the circuit that
\[
\val(e_g)>(L-\mu(g))q
\]
for every gate \(g\).

For a leaf gate \(g\), the result is trivial since \(e_g:=0\).

For an addition gate \(g\) with children \(a,b\), we have
\begin{align*}
 e_g = z_g - c_g = R_g + z_a + z_b - c_z - c_b = R_g+e_a+ e_b,
\end{align*}
and the claim follows immediately from the
induction hypothesis.

Now suppose that \(g=ab\). 
% Write \(z_a,z_b\) for the values of the gate variables
% corresponding to $a$ or $b$, or, in case $a$ or $b$ is a leaf, the value
% of the corresponding input variable or field constant.  
Then
\[e_g= z_g - c_g = R_g + z_az_b-c_ac_b = R_g + z_ae_b + z_be_a-e_ae_b = R_g+z_ae_b+z_be_a-e_ae_b.\]
Both \(z_a\) and \(z_b\) have valuation at least \(-q\) by the
definition of \(q\).
% for input and internal-gate variables, while field
% constants have nonnegative valuation. 
Thus
\[
\val(z_ae_b)>
(L-\mu(b)-1)q
\geq (L-\mu(g))q,
\]
and similarly for \(z_be_a\), since
\(\mu(g)=1+\max(\mu(a),\mu(b))\). Moreover,
\[
\val(e_ae_b)>
(2L-\mu(a)-\mu(b))q
>
(L-\mu(g))q.
\]
Together with \(\val(R_g)>Lq\), this proves the claim.

For the output gate \(o_i\) of $C_i$, we have \(\mu(o_i)\leq L\), and hence
\(\val(e_{o_i})>0\). On the other hand,
\eqref{eq:circuit-output} gives \(\val(z_{o_i})>0\). Therefore
\(C_i(\bm w)=c_{o_i}=z_{o_i}-e_{o_i}\) has positive valuation for every
\(i\), where \(\bm w\) denotes the \(\bm W\)-coordinates of \(\bm x\).
Thus \(\bm w\) is an approximate zero of the original circuit instance.

It remains to observe that the reduction can be carried out in polynomial
time using a dense encoding for the output polynomials. 
The reduction introduces one new variable for each circuit gate and $mh$ auxiliary variables, 
which makes polynomially many new variables in the 
circuit size.   The polynomials in the constructed system have degree at most 3
and so they can be written explicitly in dense representation
in polynomial time.
\end{proof}

\paragraph{Sampling.}
We recall the Schwartz-Zippel lemma: for a non-zero polynomial 
$Q\in k[Z_1,\ldots,Z_m]$ and point~$\bm a$ sampled uniformly at random from a given set $S\subseteq k$,
we have
\[ \Pr_{\bm a \in S^m} \left[ Q(\bm a)=0 \right]\leq \frac{\deg(Q)}{|S|}\, \, .
\]

When a random choice must avoid a nonzero polynomial of degree at most
$B=2^{\poly(N)}$, we sample from a finite set $S\subseteq\ell\subseteq K$
of size a fixed large multiple of $B$. For $k=\Q$, we take $\ell=\Q$ and an integer
grid. For $k=\F_q$, we pass when necessary to an extension
$\ell=\F_{q^r}\subseteq K$, where $r=\poly(N)$ and $q^r$ is sufficiently
large. Such an extension, together with an explicit embedding of $k$, can
be constructed in randomised polynomial time; see
\cite[Lemma~2.13]{AGS26}. A polynomial-time cutoff for the Las Vegas
construction contributes an arbitrarily small constant to the error
probability. Sampling can likewise be implemented with bounded running
time and an arbitrarily small failure probability. 
%\textcolor{blue}{Mahsa: Improve the following sentences:} Alternatively, after
%constructing $\ell$, one can sample exactly from a power-of-two-sized
%subset of its standard encoding. The Schwartz--Zippel bound applies over
%$\ell$ just as it does over $\Q$. In particular, an integer grid reduced
%modulo $p$ must not be used as a substitute for a sufficiently large finite
%field. Whenever intermediate instances have coefficients in $\ell$, their
%solutions and image closures are still taken over the same $K$.

\section{Reducing APS to MRV}
\label{sec-algebaric}
The goal of this section is to prove Theorem~\ref{theo-main-reduc}.

\propsubsetsnindependent*
\begin{proof}
Let $\bm f=(f_1,\ldots,f_m)\in k[X_1,\ldots,X_s]^m$. Write $\bm X=(X_1,\ldots,X_s)$.
Introduce a new indeterminate $Y$ and define
\[
\bm g=(g_1,\ldots,g_{m+s+1})
     :=(f_1,\ldots,f_m,Y,YX_1,\ldots,YX_s).
\]
The transcendence degree of $\bm g$ is $s+1$: the fraction field of
$k[Y,YX_1,\ldots,YX_s]$ is $k(Y,X_1,\ldots,X_s)$.
Moreover, $\bm f$ and $\bm g$ have the same APS status. One direction
follows by projecting onto the first $m$ coordinates. Conversely, given
an approximate solution $\bm x(\varepsilon)\in\KK^s$ for $\bm f$, choose an integer
$r\geq1$ such that $r+\val(x_j(\varepsilon))>0$ whenever $x_j(\varepsilon)\neq0$.
Then $(\bm x(\varepsilon),\varepsilon^r)$ is an approximate solution for $\bm g$.
This uses only valuations, not a series representation of the coordinates.

Let $\bm a=(a_{ij})$ be an $(s+2)\times(m+s+1)$ matrix with entries in
{a field $\ell$ with $k\subseteq\ell\subseteq K$}, and set
\[
h_i^{\bm a}(\bm X,Y)
   :=\sum_{j=1}^{m+s+1}a_{ij}g_j(\bm X,Y),
\qquad i\in[s+2].
\]
We show that a random choice of $\bm a$ makes
$\bm h^{\bm a}=(h_1^{\bm a},\ldots,h_{s+2}^{\bm a})$ minimally dependent
and preserves APS with high probability.

Order the variables as $(Y,X_1,\ldots,X_s)$ and put
$\bm p=(1,0,\ldots,0)$. Write $\bm b:=J_{\bm g}(\bm p)$ for the
$(m+s+1)\times(s+1)$ Jacobian matrix at this point. Its last $s+1$ rows
form the matrix
\[
J_{(Y,YX_1,\ldots,YX_s)}(\bm p)=I_{s+1},
\]
so $\bm b$ has rank $s+1$ in every characteristic.

Let $\bm A=(A_{ij})$ now denote a matrix of indeterminates. For an
$(s+1)$-element subset $I\subseteq[s+2]$, let $\bm A_I$ be its row
submatrix and define
\[
D_I(\bm A):=\det(\bm A_I\bm b).
\]
This is a nonzero polynomial of degree at most $s+1$: specialising the
rows of $\bm A_I$ to select $Y,YX_1,\ldots,YX_s$ makes the determinant
$1$. Consequently,
\[
D(\bm A):=\prod_{\substack{I\subseteq[s+2]\\|I|=s+1}}D_I(\bm A)
\]
is nonzero and has degree at most $(s+1)(s+2)$. If $D(\bm a)\neq0$,
the chain rule gives
$J_{\bm h_I^{\bm a}}(\bm p)=\bm a_I\bm b$, an invertible matrix,
for every such $I$.

Only the sufficient direction of the Jacobian criterion is needed
here. To spell this out in positive characteristic, suppose that a tuple
with invertible Jacobian satisfies a nonzero relation
$P\in\ell[Z_1,\ldots,Z_{s+1}]$ of least total degree. Differentiating the
relation and using the invertible Jacobian over the rational-function
field shows that all the partial derivatives of $P$ vanish after
substitution. Minimality implies that all these partial derivatives
are identically zero. In characteristic zero this makes $P$ constant,
a contradiction. In characteristic $p$, all its monomial exponents
are divisible by $p$; since the finite field $\ell$ is perfect,
{$P=Q^p$ for a polynomial $Q\in\ell[Z_1,\ldots,Z_{s+1}]$ of smaller degree.}
The polynomial ring is a domain, so $Q$ is also a relation, again a
contradiction. Thus every $(s+1)$-subtuple of $\bm h^{\bm a}$ is
algebraically independent. The entire tuple is dependent because it
has $s+2$ coordinates in $s+1$ variables. {It is therefore minimally dependent.}
No converse Jacobian criterion is used.

Let
$d\geq\max\{2,\deg f_1,\ldots,\deg f_m\}$ and let $S\subseteq\ell$
be finite. Choose the entries of $\bm a$ independently and uniformly
from $S$. Schwartz--Zippel gives
\[
\Pr[D(\bm a)=0]\leq\frac{(s+1)(s+2)}{|S|}.
\]

It remains to preserve APS. 
The random-projection theorem of Guo, Saxena and Sinhababu,
which holds over arbitrary fields, gives
\cite[Theorem~4.6]{GSS19}
\[
\Pr\bigl[{\text{$\bm g$ is a positive APS instance}}\ \Longleftrightarrow\
                 {\text{$\bm h^{\bm a}$ is a positive APS instance}}\bigr]
\geq1-\frac{(s+2)d^{s+1}}{|S|}.
\]
Here we apply their theorem over $K$, with sampling set $S\subseteq K$.
By the union bound, the probability that {$\bm h^{\bm a}$ is minimally dependent
and has} the same APS status as $\bm f$ is at least
\[
1-\frac{(s+2)(s+1+d^{s+1})}{|S|}.
\]
For any fixed error tolerance $\eta>0$, take
$|S|\geq\eta^{-1}(s+2)(s+1+d^{s+1})$.
Over $\Q$ we may use an integer grid of power-of-two cardinality.
Over a finite field, use a sufficiently large finite extension and a
set of distinct field elements as described in Section~\ref{sec-prelim}.
The extension degree and the binary length of $|S|$ are polynomial in
the input length. The lists of monomials of $\bm g$ and
$\bm h^{\bm a}$ also have polynomial size. Taking, for example,
$\eta=1/12$, and budgeting the field-construction and sampling error
separately, proves the proposition with room for the later random choices.
\end{proof}

\propchoiceaipositiveimage*
% \color{blue}
% Seb: I suggest this modification for the proof of the proposition. 
% The point is that we do not need nonemptiness in the statement of Claim 8, and so keep the proof there very short. On the other way, we make Claim 7 a bit more general to ensure that $H(a_1\varepsilon,\ldots,a_n\varepsilon,Z)$ has a root even in the negative APS instance. But we also define by the way a G for the negative APS instance case, and so make the final argument more symmetric. At the end, nonemptiness of V follows from Claim 7 and 9.
% I was hoping to get a `nicer' \(G\), defined similarly based on whether it is an APS instance or not. But I did not succeed. It does not really affect the proof, it was just to try to make it a bit cleaner.
% ---------------------------------------------------------------------
\begin{proof}
We divide the proof into four claims and conclude with a brief argument
that the proposition follows from them.

\begin{claim}
\label{claim:ONE}
There is an irreducible polynomial $H\in k[Y_1,\ldots,Y_{n+1}]$ of
degree at most $\Delta^{n+1}$ such that
\[
k[f_1,\ldots,f_{n+1}]
  \simeq k[Y_1,\ldots,Y_{n+1}]/(H).
\]
The same presentation holds after extending scalars to $K$.
Moreover, {$\bm f$ is a positive APS instance} if and only if $H(\bm0)=0$.
\end{claim}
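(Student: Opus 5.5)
The plan is to obtain $H$ as a generator of the kernel of the evaluation map $\pi:k[Y_1,\ldots,Y_{n+1}]\to k[\bm X]$, $Y_i\mapsto f_i$. This kernel is the annihilator ideal $I_{\bm f}$, and $\pi$ surjects onto $k[f_1,\ldots,f_{n+1}]$. Because $k[\bm X]$ is a domain, $I_{\bm f}$ is a prime ideal. By minimal dependence, $f_1,\ldots,f_n$ are algebraically independent while the whole tuple is dependent. Hence $\operatorname{trdeg}_k k(f_1,\ldots,f_{n+1})=n$, and so $I_{\bm f}$ is a prime of height one in the UFD $k[\bm Y]$. The first step is therefore to recall that height-one primes in a UFD are principal, generated by an irreducible element $H$. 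This gives $k[f_1,\ldots,f_{n+1}]\simeq k[\bm Y]/(H)$.

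The degree bound $\deg H\le\Delta^{n+1}$ is the step I expect to be the main obstacle, and I will invoke the Perron-type bound for annihilating polynomials. For $n+1$ polynomials in $n$ variables of degrees $\le\Delta$, there is a nonzero annihilator of degree at most $\prod_i\deg f_i\le\Delta^{n+1}$, and this holds in any characteristic (Perron's theorem; Płoski / Kayal's version). Since every annihilator is a multiple of $H$, we get $\deg H\le\Delta^{n+1}$. If I want to avoid citing Perron, I would instead use a Bézout/intersection-degree argument: $\deg H$ equals the degree of the hypersurface $\overline{\bm f(K^n)}$, and a generic line meets it in at most $\Delta^{n}\cdot\Delta$ points by pulling back to $K^n$. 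The citation is the cleaner route.

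For the extension of scalars to $K$, I will note that $K\otimes_k k[\bm Y]/(H)\simeq K[\bm Y]/(H)$ by flatness. I also need $K\otimes_k k[f_1,\ldots,f_{n+1}]\to K[\bm X]$ to be injective, with image $K[f_1,\ldots,f_{n+1}]$. This holds because $K\otimes_k k[\bm X]=K[\bm X]$ and tensoring the inclusion $k[\bm f]\subseteq k[\bm X]$ with the flat $k$-module $K$ preserves injectivity. So the annihilator ideal over $K$ is $H\cdot K[\bm Y]$. Note that $H$ need not stay irreducible over $K$, but only principality by $H$ is needed.

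Finally, for the APS characterisation I will use the fact recalled in the introduction from \cite{GSS19}: $\bm0\in\overline{\bm f(K^n)}$ if and only if every $A\in I_{\bm f}$ satisfies $A(\bm0)=0$. Equivalently, the Zariski closure of the image is the zero set of the annihilator ideal over $K$. Since that ideal is $(H)$, positivity is equivalent to $H(\bm0)=0$: if $H(\bm0)=0$ then every multiple $GH$ vanishes at $\bm0$, and conversely $H$ itself is an annihilator.
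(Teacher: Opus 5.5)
Your proposal is correct and follows essentially the same route as the paper: the kernel of $Y_i\mapsto f_i$ is a height-one prime in the UFD $k[\bm Y]$, hence equal to $(H)$; Perron's bound controls $\deg H$; exactness of base change to $K$ shows the kernel over $K$ is $H\cdot K[\bm Y]$; and the closure of the image is therefore $\{H=0\}$. One harmless slip is your aside that $H$ need not remain irreducible over $K$, which is false here: $K[\bm Y]/(H)\simeq K[f_1,\ldots,f_{n+1}]\subseteq K[\bm X]$ is a domain, so $(H)$ is prime in $K[\bm Y]$ (the paper notes exactly this), though nothing in your argument depends on the aside.
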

\begin{proof}
Let $I$ be the kernel of the homomorphism
$k[\bm Y]\to k[\bm X]$ that sends $Y_i$ to $f_i$.
It is a nonzero prime ideal, since the $f_i$'s are dependent and
$k[\bm X]$ is an integral domain. {By minimal dependence, their}
transcendence degree is $n$, so $I$ has height one. The polynomial
ring $k[\bm Y]$ is a unique factorisation domain, and hence $I=(H)$
for an irreducible polynomial $H\in k[\bm Y]$. In the rational case
only, we may normalise $H$ to be a primitive polynomial over $\Z$.

Perron's degree bound on annihilating polynomials~\cite[Theorem~1.1]{Ploski2005}
gives a nonzero element of $I$ of degree at
most $\prod_{i=1}^{n+1}\deg f_i$. Since $H$ divides this element,
\[
{D_H:=\deg H}\leq\prod_{i=1}^{n+1}\deg f_i\leq\Delta^{n+1}.
\]
Extension of scalars from $k$ to $K$ is exact. Thus the kernel of
$K[\bm Y]\to K[\bm X]$ is still generated by $H$, and
$K[\bm Y]/(H)\simeq K[f_1,\ldots,f_{n+1}]$.
In particular, this quotient is a domain. We obtain
\[
\overline{\bm f(K^n)}
   =\{\bm y\in K^{n+1}:H(\bm y)=0\}.
\]
The origin belongs to this set exactly when $H(\bm0)=0$.
\end{proof}

For the next claim, write $D_H:=\deg H \ge 1$ (since $H$ is not constant).
\begin{claim}
\label{claim:TWO}
After permuting its coordinates, there is a nonzero homogeneous polynomial $G\in k[Y_1,\ldots,Y_n]$ of {degree at most $D_H-1$} such that, 
for all $\bm a\in K^n$ with
$G(\bm a)\neq0$, the polynomial
$H(a_1\varepsilon,\ldots,a_n\varepsilon,Z)$ has a root in $\KK$. Moreover, if {$\bm f$ is a positive APS instance}, we can even find a root with positive valuation.
\end{claim}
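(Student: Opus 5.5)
The plan is to use the irreducible annihilator $H$ from Claim~\ref{claim:ONE} and split into two cases according to whether $H(\bm0)=0$. In both cases $G$ will be a nonzero homogeneous polynomial in $Y_1,\ldots,Y_n$ extracted from a coefficient of a positive power of $Y_{n+1}$ in $H$. This guarantees $\deg G\leq D_H-1$ automatically.

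\emph{Preliminary remark.} By minimal dependence, every variable $Y_i$ occurs in $H$. If $Y_i$ did not occur, then $H$ would be a nonzero relation among the remaining $n$ polynomials, contradicting their algebraic independence. Hence, for any permutation, $H$ has positive degree in $Y_{n+1}$.

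\emph{Negative case ($H(\bm0)\neq0$).} No permutation is needed. Write $H=\sum_{j=0}^{e}h_j(Y_1,\ldots,Y_n)Y_{n+1}^j$ with $e\geq1$ and $h_e\neq0$. Let $G$ be the top-degree homogeneous component of $h_e$; its degree is at most $D_H-e\leq D_H-1$. If $G(\bm a)\neq0$, then $h_e(a_1\varepsilon,\ldots,a_n\varepsilon)$ has nonzero coefficient $G(\bm a)$ at $\varepsilon^{\deg G}$, so it is nonzero. Therefore $H(a_1\varepsilon,\ldots,a_n\varepsilon,Z)$ has degree $e\geq1$ in $Z$, and since $\KK$ is algebraically closed it has a root.

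\emph{Positive case ($H(\bm0)=0$).} Let $d\geq1$ be the least total degree of a monomial of $H$, and let $H_d$ be its homogeneous degree-$d$ part. Permute coordinates so that $Y_{n+1}$ occurs in $H_d$. Write $H_d=\sum_{j}g_j(Y_1,\ldots,Y_n)Y_{n+1}^j$ and pick $j_0\geq1$ with $g_{j_0}\neq0$. Set $G:=g_{j_0}$, which is homogeneous of degree $d-j_0\leq D_H-1$. As in the overview, define
\[
H^{\circ}(\varepsilon,U):=\varepsilon^{-d}H(a_1\varepsilon,\ldots,a_n\varepsilon,\varepsilon U).
\]
This lies in $K[\varepsilon][U]$ and satisfies $H^{\circ}(0,U)=H_d(\bm a,U)$. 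Its $U^{j_0}$-coefficient is $G(\bm a)\neq0$, so the reduction of $H^\circ$ is a polynomial of degree at least $1$ in $U$. In particular $H^\circ\neq0$, and its Gauss valuation is $0$.

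Factor $H^{\circ}=c\prod_i(U-u_i)$ over $\KK$. Each factor with $\val(u_i)<0$ is rescaled to $(u_i^{-1}U-1)$, whose reduction is the constant $-1$. Each factor with $\val(u_i)\geq0$ reduces to a monic linear polynomial. By multiplicativity of the Gauss valuation, the $U$-degree of the residue of $H^\circ$ equals the number of roots $u_i$ with $\val(u_i)\geq0$. Since that degree is at least $j_0\geq1$, there is a root $u^*$ with $\val(u^*)\geq0$. Then $z^*:=\varepsilon u^*$ is a root of $H(a_1\varepsilon,\ldots,a_n\varepsilon,Z)$ with $\val(z^*)\geq1>0$. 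This gives both assertions of the claim in the positive case.

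\emph{Main obstacle.} The delicate step is the residue-degree argument: one has to check carefully that reduction modulo the maximal ideal of the valuation ring of $\KK$ commutes with the factorisation after rescaling. This is exactly the multiplicativity of the Gauss valuation recorded in the overview. It holds in any characteristic, so no separability issue arises.
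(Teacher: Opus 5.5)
Your proof is correct and follows essentially the paper's route. You use the same case split on $H(\bm 0)=0$, the same choice of $G$ as the coefficient of a positive power of $Y_{n+1}$ in the lowest-degree part $H_d$, and the same rescaled polynomial $H^{\circ}$, whose nonconstant reduction $H_d(\bm a,U)$ forces a root $u^*$ with $\val(u^*)\geq0$. The differences are minor. In the negative case you take the top homogeneous part of the leading $Y_{n+1}$-coefficient instead of a coefficient of $H_{D_H}$; via minimal dependence this works for every coordinate ordering, which is what the negative case of the proof of Theorem~\ref{theo-main-reduc} actually uses. In the positive case you count nonnegative-valuation roots directly through multiplicativity of the Gauss valuation, whereas the paper argues by contradiction from $H^{\circ}=H^{\circ}(\varepsilon,0)\prod_\ell(1-U/u_\ell)$.
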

\begin{proof}
Write $H=H_v+{H_{v+1}+\cdots+H_{D_H}}$, where $H_i$ is homogeneous of
degree $i$ and $H_{v}\neq0$. We saw that \(v = 0\) if and only if {$\bm f$ is a negative APS instance}. Let us define
\begin{align*}
    d = \begin{cases}
        v & \textrm{if }v \ne 0 \\
        D_H & \textrm{otherwise}.
    \end{cases}
\end{align*}
In all cases, \(d \ge 1\), so we can assume, up to permuting the coordinates, that $Y_{n+1}$ occurs in $H_{d}$. Choose a positive exponent $r$ for which
\[
G(Y_1,\ldots,Y_n):=\coeff_{Y_{n+1}^r}(H_{d})\neq0.
\]
Then $G$ is homogeneous of degree ${d-r\leq D_H-1}$. The requirement
$r>0$ is important: the lowest power of $Y_{n+1}$ occurring in
$H_{d}$ could be zero, even though $Y_{n+1}$ occurs in $H_{d}$.

Fix $\bm a\in K^n$ with $G(\bm a)\neq0$. Consider the polynomial \(H(a_1\varepsilon,\ldots,a_n \varepsilon,Z) \in \left(K[\varepsilon]\right)[Z]\). It has positive \(Z\)-degree since the coefficient of the monomial \(Z^r\varepsilon^{d-r}\) is \(G(\bm a)\) which is nonzero. Hence \(H(a_1\varepsilon,\ldots,a_n \varepsilon,Z) \) has a root in \(\mathcal{K}\), which proves the first part of the claim.

Assume now that {$\bm f$ is a positive APS instance}. Fix again $\bm a\in K^n$ with $G(\bm a)\neq0$. For a fresh indeterminate
$U$, define
\begin{align*}
{H^{\circ}(\varepsilon,U)}&:=\varepsilon^{-d}H\left(a_1 \varepsilon,\ldots, a_n \varepsilon,\varepsilon U\right)\\
      &={\sum_{\ell=d}^{D_H}} \varepsilon^{\ell-d}H_\ell(a_1,\ldots,a_n,U)
        \in K[\varepsilon,U].
\end{align*}
Its reduction ${H^{\circ}(0,U)}=H_{d}(\bm a,U)$ is nonconstant, since its
coefficient of $U^r$ is $G(\bm a)\neq0$.

We claim that ${H^{\circ}(\varepsilon,U)}$ has a root in $\KK$ of nonnegative valuation.
Suppose, to the contrary, that every root $u_1,\ldots,u_m$, listed
with multiplicity, has negative valuation. In particular, none is zero,
and
\[
{H^{\circ}(\varepsilon,U)}={H^{\circ}(\varepsilon,0)}\prod_{\ell=1}^m\left(1-\frac{U}{u_\ell}\right).
\]
Since $\val({H^{\circ}(\varepsilon,0)})\geq0$ and $\val(1/u_\ell)>0$, {the coefficient of each positive power of $U$ has strictly positive valuation}. The {reduction of $H^{\circ}$} would
therefore be constant, contradicting the preceding paragraph. Thus
there is a root $u^*$ with $\val(u^*)\geq0$. Setting {$z^*:=\varepsilon u^*$} gives
\begin{equation}
H(a_1\varepsilon,\ldots,a_n\varepsilon,z^*)=0,
\qquad \val(z^*)\geq1.
\label{eq:pos-root}
\end{equation}
\end{proof}

For each $j\in[n]$, apply Perron's bound to $f_1,\ldots,f_n,X_j$.
It gives a nonzero polynomial
\[
P_j(Y_1,\ldots,Y_n,U)
=q_j(Y_1,\ldots,Y_n)U^{m_j}
 +\sum_{\ell=0}^{m_j-1} %(\bm Y)U^\ell \in k[\bm Y,U]}
 {q_{j,\ell}(Y_1,\ldots,Y_n)U^\ell\in k[Y_1,\ldots,Y_n,U]}
\]
such that
\[
P_j(f_1,\ldots,f_n,X_j)=0,
\qquad
\deg P_j\leq\prod_{i=1}^n\deg f_i\leq\Delta^n.
\]
The degree $m_j$ in $U$ is positive, since otherwise $P_j$ would be
a nonzero relation among {the algebraically independent polynomials $f_1,\ldots,f_n$}.
Take $q_j\neq0$ and define
\begin{gather}
h:=q_1\cdots q_n\neq0,
\qquad \deg h\leq n\Delta^n.
\label{def:H}
\end{gather}

\begin{claim}
\label{claim:THREE}
If $\bm a\in K^n$ satisfies $h(\bm a)\neq0$, then
\begin{equation}
V:=\{\bm x\in\KK^n:f_i(\bm x)=a_i\varepsilon \text{ for every }i\in[n]\}
\label{eq:VAR}
\end{equation}
is finite, whether or not {$\bm f$ is a positive APS instance}.
\end{claim}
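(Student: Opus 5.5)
The idea is to show that each coordinate of a point of $V$ is a root of a fixed nonzero univariate polynomial over $\KK$. Then $V$ sits inside a finite product of finite root sets.

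Fix $\bm x\in V$ and $j\in[n]$. By construction $P_j(f_1,\ldots,f_n,X_j)=0$ holds as a polynomial identity in $k[\bm X]$. Evaluating this identity at $\bm x\in\KK^n$ and using $f_i(\bm x)=a_i\varepsilon$ gives
\[
P_j(a_1\varepsilon,\ldots,a_n\varepsilon,x_j)=0 .
\]
So $x_j$ is a root of the univariate polynomial
\[
p_j(U):=P_j(a_1\varepsilon,\ldots,a_n\varepsilon,U)\in K[\varepsilon][U]\subseteq\KK[U].
\]

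Next we check that $p_j$ is a nonzero polynomial of degree $m_j$ in $U$. Its leading coefficient is $q_j(a_1\varepsilon,\ldots,a_n\varepsilon)$, a polynomial in $\varepsilon$. Write $q_j=\sum_e q_{j,e}$ as a sum of homogeneous components. Then
\[
q_j(a_1\varepsilon,\ldots,a_n\varepsilon)=\sum_e q_{j,e}(\bm a)\,\varepsilon^e .
\]
This is where the hypothesis $h(\bm a)\neq0$ must be used, and it is the step needing care. The hypothesis gives $q_j(\bm a)\neq0$, but that alone does not force some homogeneous component $q_{j,e}(\bm a)$ to be nonzero in an obvious way. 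The fix is to evaluate at $\varepsilon=1$: the specialisation $q_j(a_1\varepsilon,\ldots,a_n\varepsilon)$ evaluated at $\varepsilon=1$ equals $q_j(\bm a)\neq0$. Hence $q_j(a_1\varepsilon,\ldots,a_n\varepsilon)$ is a nonzero element of $K[\varepsilon]$, and so a nonzero element of $\KK$. Consequently $p_j$ has exact degree $m_j\geq1$, and in particular is not the zero polynomial.

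A nonzero polynomial of degree $m_j$ over the field $\KK$ has at most $m_j$ roots. Therefore $V\subseteq Z_1\times\cdots\times Z_n$, where $Z_j$ is the root set of $p_j$. This gives $|V|\leq\prod_j m_j\leq\Delta^{n^2}$, so $V$ is finite.

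Nothing in this argument uses $H$, so the conclusion holds whether or not $\bm f$ is a positive APS instance. The only delicate point is the nonvanishing of the leading coefficient after the substitution $Y_i\mapsto a_i\varepsilon$, which the specialisation $\varepsilon=1$ settles.
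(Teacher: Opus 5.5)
Your proof is correct and is essentially the paper's argument. You evaluate the identity $P_j(f_1,\ldots,f_n,X_j)=0$ at a point of $V$, note that the leading coefficient $q_j(a_1\varepsilon,\ldots,a_n\varepsilon)$ is nonzero because its value at $\varepsilon=1$ is $q_j(\bm a)\neq0$, and conclude $|V|\leq\prod_j m_j$. (Your aside that $q_j(\bm a)\neq0$ does not obviously force some homogeneous component $q_{j,e}(\bm a)$ to be nonzero is mistaken, since $q_j(\bm a)=\sum_e q_{j,e}(\bm a)$, but the argument does not rely on it, and your extra bound $\prod_j m_j\leq\Delta^{n^2}$ is also fine.)
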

\begin{proof}
For each $j$, the polynomial $q_j(a_1\varepsilon,\ldots,a_n\varepsilon)$ is nonzero:
its value at $\varepsilon=1$ is $q_j(\bm a)\neq0$. Thus
$P_j(a_1\varepsilon,\ldots,a_n\varepsilon,U)$ has degree $m_j$ in $U$.
If $\bm x\in V$, then
\[
P_j(a_1\varepsilon,\ldots,a_n\varepsilon,x_j)=P_j(f_1(\bm x),\ldots,f_n(\bm x),x_j)=0.
\]
Each coordinate has at most $m_j$ possible values, so
$|V|\leq\prod_{j=1}^n m_j$.
\end{proof}

\begin{claim}
\label{claim:FOUR}
If $\bm a\in K^n$ satisfies $h(\bm a)\neq0$, then every root of
$H(a_1\varepsilon,\ldots,a_n\varepsilon,Z)$ in $\KK$ lies in $f_{n+1}(V)$.
\end{claim}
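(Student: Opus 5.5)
We follow the strategy sketched in the proof outline of Section~\ref{sec-overview}: build a homomorphism out of $B:=K[f_1,\ldots,f_{n+1}]$, localise so that $K[\bm X]$ becomes integral, and extend the homomorphism with Proposition~\ref{prop-lang}. Fix $\bm a$ with $h(\bm a)\neq0$ and a root $z^*\in\KK$ of $H(a_1\varepsilon,\ldots,a_n\varepsilon,Z)$.

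\textbf{Step 1: the homomorphism $\varphi$.} By Claim~\ref{claim:ONE}, $B\simeq K[\bm Y]/(H)$ after extending scalars to $K$. The $K$-algebra map $K[\bm Y]\to\KK$ with $Y_i\mapsto a_i\varepsilon$ for $i\in[n]$ and $Y_{n+1}\mapsto z^*$ kills $H$. It therefore factors through a homomorphism $\varphi:B\to\KK$ satisfying $\varphi(f_i)=a_i\varepsilon$ for $i\in[n]$ and $\varphi(f_{n+1})=z^*$.

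\textbf{Step 2: localisation.} Put $s:=h(f_1,\ldots,f_n)\in B$, with $h=q_1\cdots q_n$ as in~\eqref{def:H}. Then $s\neq0$ by algebraic independence, and $\varphi(s)=h(a_1\varepsilon,\ldots,a_n\varepsilon)$. This value is nonzero because it is a polynomial in $\varepsilon$ whose value at $\varepsilon=1$ is $h(\bm a)\neq0$ (the same argument as in Claim~\ref{claim:THREE}). Hence $\varphi$ extends to $\varphi_s:B_s\to\KK$. Let $C:=K[X_1,\ldots,X_n]$, so $B\subseteq C$ and $B_s\subseteq C_s$ (inside $K(\bm X)$, with $B$ a domain). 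Divide the relation $P_j(f_1,\ldots,f_n,X_j)=0$ by $q_j(f_1,\ldots,f_n)$. The factor $q_j(f_1,\ldots,f_n)$ is invertible in $B_s$ because it divides $s$. This gives a monic equation for $X_j$ with coefficients in $K[f_1,\ldots,f_n]_s\subseteq B_s$, so each $X_j$ is integral over $B_s$. The elements $1/s^t$ already lie in $B_s$. Since integral elements form a ring, $C_s=B_s[X_1,\ldots,X_n]$ is integral over $B_s$.

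\textbf{Step 3: extension and conclusion.} Proposition~\ref{prop-lang}, applied with $L=\KK$ (algebraically closed), yields $\widetilde\varphi:C_s\to\KK$ extending $\varphi_s$. Set $x_j:=\widetilde\varphi(X_j)$. Since $\widetilde\varphi$ is a $K$-algebra homomorphism, it acts on polynomials by evaluation at $\bm x$. Thus $f_i(\bm x)=\widetilde\varphi(f_i)=a_i\varepsilon$ for $i\in[n]$, so $\bm x\in V$, and $f_{n+1}(\bm x)=\varphi(f_{n+1})=z^*$. Hence $z^*\in f_{n+1}(V)$.

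The main obstacle is Step~2: checking that $\varphi$ really extends to $B_s$, i.e.\ that $\varphi(s)\neq0$, and that integrality holds in positive characteristic. Neither uses separability, since the argument is purely about monic relations derived from Perron's polynomials $P_j$. One small point to verify is that $\widetilde\varphi$ is $K$-linear. This follows because $K\subseteq B$ and $\varphi$ is $K$-linear by construction.
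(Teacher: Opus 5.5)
Your proposal is correct and follows the paper's proof essentially step for step. Like the paper, you build $\varphi$ from the presentation in Claim~\ref{claim:ONE}, localise at $h(f_1,\ldots,f_n)$ so that the Perron relations $P_j$ become monic, and extend via Proposition~\ref{prop-lang}; your explicit check that $h(a_1\varepsilon,\ldots,a_n\varepsilon)\neq0$ (by evaluating at $\varepsilon=1$) fills in a step the paper asserts without comment.
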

\begin{proof}
Define rings $B\subseteq C$ by
\begin{equation}
B:=K[f_1,\ldots,f_n,f_{n+1}],
\qquad C:=K[X_1,\ldots,X_n].
\label{eq:RINGS}
\end{equation}
{Since $f_1,\ldots,f_n$ are algebraically independent,}
$h(f_1,\ldots,f_n)\neq0$. Write
\[
B_h:=B\left[\frac{1}{h(f_1,\ldots,f_n)}\right],
\qquad
C_h:=C\left[\frac{1}{h(f_1,\ldots,f_n)}\right].
\]
Every $q_j(f_1,\ldots,f_n)$ is invertible in $B_h$. Dividing the
relation $P_j(f_1,\ldots,f_n,X_j)=0$ by its leading coefficient gives
a monic equation for $X_j$ over $B_h$. Therefore $C_h$ is integral
over $B_h$.

Let $z^*\in\KK$ be any root of $H(\bm a\varepsilon,Z)$. By
Claim~\ref{claim:ONE}, the presentation
{$B\simeq K[Y_1,\ldots,Y_n,Z]/(H)$, with $Y_{n+1}$ renamed $Z$,} gives a $K$-algebra homomorphism
$\varphi:B\to\KK$ with
\[
\varphi(f_i)=a_i\varepsilon\quad(\text{with }i\in[n]),
\qquad \varphi(f_{n+1})=z^*.
\]
As $\varphi(h(f_1,\ldots,f_n))=h(\bm a\varepsilon)\neq0$, it extends to
$\varphi':B_h\to\KK$. Proposition~\ref{prop-lang} gives an extension
$\varphi'':C_h\to\KK$. Define
\[
\bm x^*:=\bigl(\varphi''(X_1),\ldots,\varphi''(X_n)\bigr)\in\KK^n.
\]
Then $f_i(\bm x^*)=a_i\varepsilon$ for $i\in[n]$, and
$f_{n+1}(\bm x^*)=z^*$. Hence $z^*\in f_{n+1}(V)$.
\end{proof}

We now conclude the proof of the proposition. Use the permutation from Claim~\ref{claim:TWO} and set {$Q_{\mathrm{bad}}:=Gh$}. We know 

% If {$\bm f$ is a positive APS instance},
% If {$\bm f$ is a negative APS instance}, take any coordinate order and set {$Q_{\mathrm{bad}}:=h$}.
% In either case, {$Q_{\mathrm{bad}}\neq0$} and
\[
\deg Q_{\mathrm{bad}} \leq \Delta^{n+1}-1+n\Delta^n=\Delta_{\mathrm{bad}}-1.
\]
If ${Q_{\mathrm{bad}}(\bm a)} \neq 0$, Claim~\ref{claim:THREE} makes $V$ finite. Moreover, Claims~\ref{claim:TWO}
and~\ref{claim:FOUR} make \(V\) nonempty. In the positive case, Claims~\ref{claim:TWO}
and~\ref{claim:FOUR} give even a point of $V$ whose last image coordinate
has positive valuation. Conversely, any point in \(V\) whose last image coordinate
has positive valuation for some \(\bm a\) in \(K^n\) is an approximate solution of \(\bm f\), and hence
ensures that \(\bm f\) is a positive APS instance.
\end{proof}

Let \(\bm a \in K^n\) be such that \(Q_{\mathrm{bad}}(\bm a) \ne 0\) and the tuple \(\bm f\) ordered as described in Proposition~\ref{prop-choice-ai-positive-image}.
Denote by \(G_i(X_0,\ldots,X_n)\) the homogeneisation of \(f_i-a_i\varepsilon\) (for \(1 \le i \le n\)) and by \(F(X_0,\ldots,X_n)\) the homogeneisation of \(f_{n+1}\).  Recall that we defined 
\begin{align*}
& \widetilde G_i:=G_i+TX_i^{d_i}\qquad\quad \textrm{with } d_i:=\deg_{X_0,\ldots,X_n}G_i, \\
& L:=ZX_0^d+\sum_{i=1}^nW^iX_i^d-F
\end{align*}
where \(T\), \(W\), and \(Z\) are new indeterminates. We defined also
\begin{equation}
    R_0(T,W,\varepsilon,Z):=\Res(\widetilde G_1,\ldots,\widetilde G_n,L).
    \label{eq:R0-second}
\end{equation}

\begin{lemma}
\label{lem-resultant-images}
Suppose that the set $V$ in~\eqref{eq:VAR} is finite and nonempty.
Then the resultant $R_0$ in~\eqref{eq:R0-second} is nonzero and
\[
R_2(\varepsilon,Z):=\TT_W\TT_T R_0(T,W,\varepsilon,Z)
 =c\prod_{\bm x\in V}\bigl(Z-f_{n+1}(\bm x)\bigr)^{m_{\bm x}},
\]
where $c\in\KK^\times$ and every $m_{\bm x}$ is a positive integer.
\end{lemma}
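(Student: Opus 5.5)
The plan is to follow the generalised characteristic polynomial argument of Ierardi, in the form of \cite[Lemmas~4.6--4.7]{AGS26}. We work over the algebraically closed field $\overline{\KK(W,Z)}$, or first over $\KK$ treating $W,Z$ as further indeterminates.

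\textbf{Step 1: $R_0\neq0$.} Specialise $\varepsilon$, $Z$, $W$ suitably and look at $T$ large, or pass to the leading coefficient in $T$. At $T\to\infty$ (equivalently, looking at the top $T$-coefficient) the forms $\widetilde G_i$ behave like $X_i^{d_i}$. Together with $L$ evaluated at $X_1=\cdots=X_n=0$, which is $ZX_0^d$ since $F$ has no pure $X_0^d$ term after possibly adjusting, this system has only the trivial projective zero. Hence the resultant of the leading forms is a nonzero monomial in $Z$, so $R_0$ is a nonzero polynomial. A cleaner route is to note that the $T$-leading part of the resultant equals $\Res(X_1^{d_1},\ldots,X_n^{d_n},ZX_0^d+\cdots)$ up to sign, which is $Z^{d_1\cdots d_n}$.

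\textbf{Step 2: factorise $\TT_T R_0$.} View the system $\widetilde G_1,\ldots,\widetilde G_n$ over the field $\overline{\KK(T)}$. For generic $T$, Step~1's argument shows these $n$ forms in $n+1$ variables have finitely many projective zeros, counted with multiplicity, $\prod d_i$ in total. By the Poisson formula \cite[Theorem~3.4]{cox1992ideals}, $R_0$ equals a unit times $\prod_{\bm\xi(T)}L(\bm\xi(T))^{\mu}$. These zeros lie in Puiseux series in $T$ over $\KK$. Taking $T\to0$ limits, each $\bm\xi(T)$ specialises to a projective zero $\bm\xi(0)$ of $G_1,\ldots,G_n$, and $\TT_T R_0=c'\prod L(\bm\xi(0))^{\mu}$ with $c'\in\KK^\times$. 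The key input, quoted from \cite[Lemma~2.6]{Ierardi89} and \cite[Lemma~4.7]{AGS26}, is that every isolated zero of $G_1,\ldots,G_n$ appears among the limits $\bm\xi(0)$ with multiplicity at least one. Since $V$ is finite, every affine zero $[1:\bm x]$ with $\bm x\in V$ is isolated, hence appears. Every limit point is either such an affine point or lies at infinity. Here the factors of $L$ must be handled with care: one normalises the limit points, since $L(\bm\xi(T))$ has trailing term $L(\bm\xi(0))$ whenever $\bm\xi(0)$ is a nonzero vector.

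\textbf{Step 3: take $\TT_W$.} At $[1:\bm x]$ the factor is $Z+\sum_i W^ix_i^d-f_{n+1}(\bm x)$, whose $W$-trailing coefficient is $Z-f_{n+1}(\bm x)$. This uses that $Z-f_{n+1}(\bm x)\neq0$ as a polynomial in $Z$. At a point at infinity $[0:\bm x]$, with some $x_i\ne0$, the factor $\sum W^ix_i^d-F(0,\bm x)$ is a nonzero polynomial in $W$ free of $Z$, so its trailing coefficient is a nonzero constant. Since $\TT_W$ is multiplicative, multiplying these gives the claimed formula, with $m_{\bm x}\ge1$ the multiplicity of $[1:\bm x]$ and with $c$ absorbing $c'$ and the constants coming from infinity.

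\textbf{Main obstacle.} The main obstacle is Step~2: justifying that the $T$-trailing coefficient factors through the limit points, and that every point of $V$ survives as a limit with positive multiplicity. I would import this directly from \cite{AGS26,Ierardi89}, checking only that their hypotheses hold. Those hypotheses are that $\widetilde G_i$ is the perturbation by $X_i^{d_i}$ with $d_i=\deg G_i$, and that $V$ being finite makes its points isolated components. That last check can be argued by dimension: a positive-dimensional component through $[1:\bm x]$ would contain infinitely many affine points of $V$.
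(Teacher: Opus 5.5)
Your proposal is correct and follows essentially the paper's route: cite \cite{Ierardi89,AGS26} to factor $\TT_T R_0$ into linear forms $L(\bm\xi)$ over the limit points, use finiteness of $V$ to make every affine zero isolated so that the affine limit points are exactly $V$, and take $\TT_W$ factor by factor. Your explicit top-$T$-coefficient argument for $R_0\neq0$ is a useful addition that the paper leaves implicit, but it has a slip: $L(1,0,\ldots,0)=Z-f_{n+1}(\bm 0)$, so that coefficient is $\pm(Z-f_{n+1}(\bm 0))^{d_1\cdots d_n}$ rather than $Z^{d_1\cdots d_n}$ (still nonzero, so no adjustment of $F$ is needed); also, in characteristic $p$ the roots $\bm\xi(T)$ need not be Puiseux series, so your heuristic limit argument should use a valuation on $\overline{\KK(T)}$ extending the $T$-adic one, as the cited lemmas effectively do.
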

\begin{proof}
We use the trailing-resultant statements of \cite[Lemmas~4.6 and~4.7]{AGS26}, which follow~\cite{Ierardi89} and apply over any field. These lemmas concern a certain limit set $V_0^{*} \subseteq \mathbb P^n(\KK)$ of projective points.
Here we do not need to know how $V_0^{*}$ is defined (see Section 2 in~\cite{Ierardi89} or Section 4.2 in~\cite{IerPhD} for details).  The properties of $V_0^{*}$ that we use, which are stated in
Lemma 4.6 in~\cite{AGS26}, are that it is a finite subset of the
projective variety \(V_{\mathbb P^n(\mathcal{K})}(G_1,\ldots,G_n)\) that contains all of its isolated points.

The affine part of \(V_{\mathbb P^n(\mathcal{K})}(G_1,\ldots,G_n)\) is exactly 
\begin{align*}
    \{ [1 :\bm x] \mid \bm x \in V\} 
\end{align*}
and, since \(V\) is finite, all the above points are isolated. It follows that the affine part of $V_0^{*}$ equals the affine part of \(V_{\mathbb P^n(\mathcal{K})}(G_1,\ldots,G_n)\):
\begin{equation*}
    \{x \mid [1:\bm x] \in V_0^*\} = V.
\end{equation*}

Applying Lemma 4.7 in~\cite{{AGS26}} with $m=1$ and the form \(L\) as above,
shows that the polynomial $\TT_T R_0 \in K[W,\varepsilon,Z]$ factors into a product of linear forms over \(\KK\). Furthermore, for each projective point \([x_0 : \cdots : x_n] \in V_0^*\), the corresponding linear factor is
\begin{equation*}
    Zx_0^d + \sum_{i=1}^n W^i x_i^d - F([x_0 : \cdots : x_n])
\end{equation*}
and every factor of \(\TT_T R_0\) has this structure for some point \([x_0 : \cdots : x_n]\) in \(V_0^*\).

At an affine limit point, normalised to $[1:\bm x] \in V_0^*$, the evaluation of this factor is
\begin{equation*}
Z+\sum_{i=1}^n W^ix_i^d-f_{n+1}(\bm x),
\end{equation*}
a nonzero polynomial whose trailing coefficient in $W$ is
$Z-f_{n+1}(\bm x)$. At a limit point at infinity $[0:\bm x] \in V_0^*$,
its evaluation is
\begin{equation*}
    \sum_{i=1}^n W^ix_i^d-F(0,x_1,\ldots,x_n).
\end{equation*}
This polynomial is nonzero: indeed some $x_i$ is non-zero, and the positive powers
$W^i$ are distinct. Its trailing coefficient in $W$ does not depend on \(Z\), and so  is a nonzero scalar in \(\mathcal{K}\).
Taking the trailing coefficient in $W$ of the polynomial \(\TT_T R_0\) 
now gives the claimed formula because the trailing-coefficient is multiplicative.
If a projective limit point occurs with multiplicity \(m\), its linear factor occurs with the same multiplicity \(m\); taking \(\TT_W\) preserves these multiplicities because trailing coefficients multiply.
\end{proof}

\begin{lemma}
\label{lem-weighted-valuation}
Let $P\in k[T,W,\varepsilon,Z]\setminus\{0\}$ have degree at most $D_0$ in
each of $T,W,\varepsilon$. Set $m:=D_0+1$ and
\[
P_2(\varepsilon,Z):=\TT_W\TT_T P,
\qquad 
\widehat P(\varepsilon,Z):=P(\varepsilon^{m^2},\varepsilon^m,\varepsilon,Z).
\]
Then $\widehat P\neq0$, and
\[
\val(P_2(\varepsilon,Z))<\val(P_2(\varepsilon,0))
\quad\Longleftrightarrow\quad
\val(\widehat P(\varepsilon,Z))<\val(\widehat P(\varepsilon,0)).
\]
\end{lemma}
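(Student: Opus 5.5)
Write $P=\sum_{a,b,j}p_{a,b,j}(Z)\,T^aW^b\varepsilon^j$ with $p_{a,b,j}\in k[Z]$ and $0\le a,b,j\le D_0$. The substitution sends $T^aW^b\varepsilon^j$ to $\varepsilon^{am^2+bm+j}$. Since $0\le a,b,j<m$, the map $(a,b,j)\mapsto am^2+bm+j$ is injective: it is base-$m$ expansion. It is also order-preserving for the lexicographic order on triples. Hence
\[
\widehat P(\varepsilon,Z)=\sum_{e}\widehat p_e(Z)\varepsilon^e,
\]
where each $\widehat p_e$ equals a single $p_{a,b,j}$. In particular $\widehat P\neq0$, since $P\neq 0$ has some nonzero $p_{a,b,j}$.

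The plan is to express both sides of the equivalence through the lexicographically least triple. Consider the Gauss valuation of a polynomial in $Z$ with coefficients in $k[\varepsilon]$ (or, more generally, for $Q(\varepsilon,Z)=\sum_j q_j(Z)\varepsilon^j$ with $q_j\in k[Z]$). The key observation is that $\val(Q)=\min\{j:q_j\neq0\}$ and $\val(Q(\varepsilon,0))=\min\{j:q_j(0)\neq0\}$. The first holds because coefficients of $Z^i$ are $\sum_j \coeff_{Z^i}(q_j)\varepsilon^j$, which are genuine polynomials in $\varepsilon$ over $k$, so their valuation is the least $j$ with nonzero coefficient. Thus the condition $\val(Q)<\val(Q(\varepsilon,0))$ says exactly this: at the least $j_0$ with $q_{j_0}\neq0$, one has $q_{j_0}(0)=0$. (If $Q(\varepsilon,0)=0$ the right side is $\infty$, consistent.)

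Apply this to both sides. For $\widehat P$, the least $e$ with $\widehat p_e\neq 0$ corresponds to the lex-least triple $(a_0,b_0,j_0)$ with $p_{a_0,b_0,j_0}\neq0$. For $P_2$, note that $\TT_T P=\sum_{b,j}p_{a_0,b,j}W^b\varepsilon^j$, where $a_0$ is the least $a$ with some $p_{a,b,j}\neq0$. Then $\TT_W$ of this picks $b_0$, the least $b$ with $p_{a_0,b,j}\neq 0$ for some $j$. So $P_2=\sum_j p_{a_0,b_0,j}\varepsilon^j$, whose least nonzero $\varepsilon$-coefficient is at $j_0$. These are the same triple, so both conditions reduce to the single statement $p_{a_0,b_0,j_0}(0)=0$, which gives the equivalence.

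One subtlety needs care: trailing coefficients are taken with $\varepsilon$ and $Z$ treated as coefficient variables. That is, $\TT_T$ views $P$ as a polynomial in $T$ over $k[W,\varepsilon,Z]$, which is what I used. A second subtlety is that $\val$ on $k[\varepsilon][Z]$ via the Gauss valuation is computed coefficientwise in $Z$. The only real obstacle is stating the key observation cleanly. Once the "least nonzero coefficient" description of both valuations is in place, the rest is the base-$m$ bookkeeping.
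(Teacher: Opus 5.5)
Your proposal is correct and takes essentially the same route as the paper. The base-$m$ encoding makes the lexicographically least triple $(a_0,b_0,j_0)$ govern both $P_2$ and $\widehat P$, and each valuation condition is equivalent to $Z$ dividing the $\varepsilon$-trailing coefficient, i.e.\ $p_{a_0,b_0,j_0}(0)=0$; you also justify this reduction of the Gauss-valuation condition, which the paper uses without proof in the first and last steps of its chain of equivalences.
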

\begin{proof}
A monomial $T^aW^b\varepsilon^jZ^i$ in $P$ becomes $\varepsilon^{am^2+bm+j}Z^i$ in $\widehat P$.
Since $0\leq a,b,j<m$, the exponent uniquely encodes $(a,b,j)$,
and its numerical order is the lexicographic order on those triples.
There is therefore no collision of distinct monomials and no
cancellation caused by the substitution, in any characteristic.

Let $(a_0,b_0)$ be the lexicographically least pair 
such that the coefficient of $T^{a_0}W^{b_0}$ in $P$ is non-zero.
Then, for any \(0 \le i \le D_0\),
\begin{equation}\label{eq:TT_val_coeff}
    \coeff_{\varepsilon^i} P_2 = \coeff_{T^{a_0}W^{b_0}\varepsilon^i} P 
    = \coeff_{\varepsilon^{a_0m^2 + b_0m+i}} \widehat{P}.
\end{equation}
Let \(c_0 := \val (P_2(\varepsilon,Z))\). Since \(P_2\) is a nonzero polynomial of degree at most \(D_0\), we know that \(c_0 \le D_0\). By definition
\begin{equation}\label{eq:TT_val_val}
    c_0 = \val(\coeff_{T^{a_0}W^{b_0}} P) = \val(\widehat{P}) - a_0 m^2 - b_0 m.
\end{equation}
We show the equivalence of the lemma
\begin{align*}
    \val(P_2(\varepsilon,Z))<\val(P_2(\varepsilon,0))
    & \quad\Longleftrightarrow\quad
    Z \textrm{ divides } \coeff_{\varepsilon^{c_0}} P_2 \\
    %& \quad\Longleftrightarrow\quad
    %Z \textrm{ divides } \coeff_{T^{a_0}W^{b_0}\varepsilon^{c_0}} P \\
    & \quad\Longleftrightarrow\quad
    Z \textrm{ divides } \coeff_{\varepsilon^{a_0m^2 + b_0m + c_0}} \widehat{P} && \textrm{by~(\ref{eq:TT_val_coeff})}\\
    & \quad\Longleftrightarrow\quad
    Z \textrm{ divides } \TT_{\varepsilon} \widehat{P} && \textrm{by~(\ref{eq:TT_val_val})}\\
    & \quad\Longleftrightarrow\quad
    \val(\widehat{P}(\varepsilon,Z))<\val(\widehat{P}(\varepsilon,0)).
\end{align*}
% The coefficients of $P_2$ are exactly the coefficients in this first
% $(a,b)$-block. If the constant coefficient in $Z$ occurs in that block,
% comparison of its valuation with the least valuation in the block is
% unchanged after adding the common offset $a_0m^2+b_0m$.
% {If it does not occur in that block, its valuation after substitution
% is either determined by a later block, and hence is strictly larger,
% or is infinite.}
% The same strict comparison holds for $P_2$, whose constant coefficient
% is then zero. These cases give the equivalence.
\end{proof}

\maintheoreduc*
\begin{proof}
We first describe the reduction and then prove its correctness and running-time
bounds.

Let $\bm f$ be an instance of $\APS_k$, where $k=\Q$ or
$k=\F_q$. By Proposition~\ref{prop-subsets-n-independent}, with arbitrarily
small constant error probability we may reduce to the case that
\[
    \bm f=(f_1,\ldots,f_{n+1})\in k[\bm X]^{n+1},
    \qquad \bm X=(X_1,\ldots,X_n),
\]
where $\bm f$ is minimally dependent. Over a finite field, the coefficients
of $\bm f$
may lie in a finite extension of $k$ of degree polynomial in the input
length. We enlarge this extension, if necessary, so that it also contains
the sampling set used below, and denote the resulting field by $\ell$.
For $k=\Q$ we take $\ell=\Q$.

Put
\[
    \Delta:=\max_{i\in[n+1]}\deg f_i,
    \qquad
    \Delta_{\mathrm{bad}}:=\Delta^{n+1}+n\Delta^n.
\]
Choose a finite set $S\subseteq\ell$ satisfying
\[
    |S|\geq 12(n+1)\Delta_{\mathrm{bad}}.
\]

We do not know which coordinate is the distinguished coordinate in
Proposition~\ref{prop-choice-ai-positive-image}. We therefore try all
$n+1$ possibilities. For each $j\in[n+1]$, we permute the coordinates so
that $f_j$ becomes the last coordinate, and write the resulting ordered
tuple again as
\((f_1,\ldots,f_n,f_{n+1})\).

Choose $\bm a=(a_1,\ldots,a_n)\in S^n$ uniformly at random and set
\[
    V:=\{\bm x\in\KK^n:
        f_i(\bm x)=a_i\varepsilon\text{ for all }i\in[n]\}.
\]

For this choice of $\bm a$, let $G_i$ be the homogenisation
of $f_i-a_i\varepsilon$ for $i\in[n]$, let $F$ be the homogenisation of
$f_{n+1}$, and form
\[
    \widetilde G_i:=G_i+TX_i^{d_i},
    \qquad
    L:=ZX_0^d+\sum_{i=1}^n W^iX_i^d-F
\]
and consider the resultant
\[
    R_0(T,W,\varepsilon,Z)
      :=\Res(\widetilde G_1,\ldots,\widetilde G_n,L)
\]
as above.
We note that the reduction does not compute $R_0$.

By multihomogeneity of the resultant, the degree of $R_0$ in each of
$T,W,\varepsilon$ is at most
\[
    D_0:=2(n+1)D^{n+1},
    \qquad
    D:=\max\{d_1,\ldots,d_n,d\}.
\]
Set
\(m:=D_0+1\).
The reduction supplies to the $\MRV_\ell$ oracle the
homogeneous forms obtained by making the substitutions 
\(T=\varepsilon^{m^2},
    W=\varepsilon^m
\) in
$\widetilde G_1,\ldots,\widetilde G_n,L$. By universality of the
resultant, the resultant of the obtained forms is
\[
    \widehat R_0(\varepsilon,Z)
       :=R_0(\varepsilon^{m^2},\varepsilon^m,\varepsilon,Z).
\]
We make the MRV query with the distinguished monomial $Z^0$. The reduction
accepts if and only if at least one of the $n+1$ oracle queries returns
yes.
This completes the description of the reduction.

We now argue the correctness of an individual query to the MRV oracle.
Suppose first that $V$ is finite and nonempty. By
Lemma~\ref{lem-resultant-images},
\[
    R_2(\varepsilon,Z)
      :=\TT_W\TT_T R_0(T,W,\varepsilon,Z)
      =c\prod_{\bm x\in V}
        \bigl(Z-f_{n+1}(\bm x)\bigr)^{m_{\bm x}},
\]
where $c\in\KK^\times$ and every $m_{\bm x}$ is positive. Hence, by
\eqref{eq-R-val},
\(\val(R_2(\varepsilon,Z))
       <\val(R_2(\varepsilon,0))\)
if and only if there exists $\bm x\in V$ such that
\(\val\bigl(f_{n+1}(\bm x)\bigr)>0\).

Lemma~\ref{lem-weighted-valuation} gives
\[
    \val(R_2(\varepsilon,Z))
       <\val(R_2(\varepsilon,0))
    \quad\Longleftrightarrow\quad
    \val(\widehat R_0(\varepsilon,Z))
       <\val(\widehat R_0(\varepsilon,0)).
\]
The condition on the right is that tested by
$\MRV_\ell$ when the distinguished monomial is $Z^0$. Hence,
whenever $V$ is finite and nonempty, the corresponding oracle query
returns yes if and only if
there exists $\bm x\in V$ such that $ \val\bigl(f_{n+1}(\bm x)\bigr)>0$.

Suppose that $\bm f$ is a positive APS instance. For the coordinate
ordering supplied by
Proposition~\ref{prop-choice-ai-positive-image}, there is a nonzero
polynomial $Q_{\mathrm{bad}}$ of degree at most
$\Delta_{\mathrm{bad}}-1$ such that, whenever
$Q_{\mathrm{bad}}(\bm a)\neq0$, the set $V$ is finite and nonempty and
contains a point $\bm x$ satisfying
\(\val\bigl(f_{n+1}(\bm x)\bigr)>0\).
For this coordinate choice, therefore, the corresponding $\MRV_\ell$
query returns yes. By Schwartz--Zippel,
\[
    \Pr[Q_{\mathrm{bad}}(\bm a)=0]
       \leq
       \frac{\Delta_{\mathrm{bad}}-1}{|S|}
       <\frac{1}{12(n+1)}.
\]
Thus, except with the stated small error probability, at least one of
the $n+1$ queries accepts.

Now suppose that $\bm f$ is a negative APS instance. We first observe that
the queries corresponding to all $n+1$ coordinate choices are well
behaved for generic $\bm a$.

Fix one such choice. Since $\bm f$ is minimally dependent, its
irreducible annihilator $H$ depends nontrivially on the coordinate left
unspecified. The construction in
Claims~\ref{claim:TWO}--\ref{claim:FOUR} yields a
nonzero polynomial $Q_j$ of degree at most
$\Delta_{\mathrm{bad}}-1$ such that
$V$ is finite and nonempty whenever 
\(Q_j(\bm a)\neq0\).

Assume that $Q_j(\bm a)\neq0$ for every coordinate choice $j$. If one of
the corresponding $\MRV_\ell$ queries returned yes, then
there would exist a point $\bm x\in\KK^n$ for
which the $n$ prescribed coordinates satisfy
\(f_i(\bm x)=a_i\varepsilon\)
and the remaining coordinate has positive valuation. This yields an
approximate solution to $\bm f$, contradicting the assumption that
$\bm f$ is a negative APS instance. Thus every query returns no under the assumption on $\bm a$.

A union bound and Schwartz--Zippel give
\[
    \Pr\bigl[
       Q_j(\bm a)=0
       \text{ for some }j\in[n+1]
    \bigr]
    \leq
    \frac{(n+1)(\Delta_{\mathrm{bad}}-1)}{|S|}
    <\frac{1}{12}.
\]

Combining the preceding bound with the error probability in
Proposition~\ref{prop-subsets-n-independent}, and, over finite fields,
the error incurred in the bounded-time construction and sampling of the
extension field, gives total error strictly less than $1/3$ after
choosing the corresponding constants appropriately.

It remains to check that the reduction runs in polynomial time. All
degree bounds and all exponents appearing in the substitutions
$T=\varepsilon^{m^2}$ and $W=\varepsilon^m$ have polynomially many bits.
The substituted homogeneous forms therefore have polynomial-size sparse
descriptions, even though their formal degrees may be exponential. Over
a finite field, the required extension $\ell/k$ has degree polynomial in
the input length and can be represented using polynomially many bits.
Finally, the reduction makes only $n+1$ oracle queries.

In summary, we obtain a randomised polynomial-time Turing reduction from
$\APS_{\Q}$ to $\mathrm{MRV}_{\Q}$, and, when $k$ is finite, from $\APS_k$ to
$\mathrm{MRV}_\ell$ for a finite extension $\ell/k$ of degree polynomial in the
input length.
\end{proof}

\section{MRV is in the Counting Hierarchy}
\label{sec-algorithm}
In this section we give the second substantial component of the proof of Theorem~\ref{theo-main}, namely an argument that $\mathrm{MRV}_k$
lies in $\CH$.
The main ingredient is~\cite[Theorem~6.7]{AGS26}, which 
places the bit language of the multivariate resultant in $\CH$ over the polynomial rings
$\Z[\bm Y]$ and $\F_{q}[\bm Y]$. We only need the following consequence of this result:

\begin{proposition}
\label{prop-coeff-resultant}
Let $k$ be $\Q$ or a finite field. Given homogeneous forms (in sparse representation)
\[
P_0,\ldots,P_n\in k[Y_1,\ldots,Y_v][X_0,\ldots,X_n]
\]
of specified positive degrees in $X_0,\ldots,X_n$, together with a binary-encoded exponent tuple $\bm\beta$, 
the problem of determining whether
\[
\coeff_{\bm Y^{\bm\beta}}
       {\Res_{X_0,\ldots,X_n}}(P_0,\ldots,P_n)=0
\]
is in $\CH$.
\end{proposition}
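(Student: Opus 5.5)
The plan is to read this off \cite[Theorem~6.7]{AGS26}, which puts the bit language of the multivariate resultant in $\CH$ over $\Z[\bm Y]$ and $\F_q[\bm Y]$. That language consists of tuples (input forms, monomial index, bit index, and for finite fields a coordinate index in the power basis) such that the given bit of the given coefficient of $\Res(P_0,\ldots,P_n)$ is $1$. So the proof is mostly an encoding argument: reduce our input to the format of that theorem, then express ``coefficient equals zero'' as a query to the bit language with one extra universal quantifier.

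\textbf{Step 1 (normalise coefficients).} For $k=\Q$, clear denominators. Multiply each $P_i$ by a common denominator $N_i\in\Z_{>0}$ of its coefficients; the product of these has polynomial bit length. Since $\Res$ is homogeneous in the coefficients of each $P_i$, of degree $\prod_{j\neq i}\deg P_j$, we get
\[
\Res(N_0P_0,\ldots,N_nP_n)=\Bigl(\prod_i N_i^{\prod_{j\ne i}\deg P_j}\Bigr)\Res(P_0,\ldots,P_n).
\]
The scalar factor is nonzero, so vanishing of a coefficient is unchanged. No exponentially large number is ever written down: only the scaled forms are passed on. For $k=\F_q$ given as $\F_p[U]/(\mu)$, the inputs already match the representation used in \cite{AGS26}. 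If that theorem instead requires prime fields, treat $U$ as an extra variable and keep track of the required reduction modulo $\mu$. I expect the theorem to handle $\F_q$ directly, so I will check which representation it uses.

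\textbf{Step 2 (zero test via bits).} The coefficient $c=\coeff_{\bm Y^{\bm\beta}}\Res$ is an integer (respectively, a vector over $\F_p$). Standard degree and height bounds on the resultant, via its determinantal (Macaulay) formula, bound its bit length by $2^{\poly(N)}$. Hence bit indices $b$ have $\poly(N)$-bit encodings, and
\[
c=0 \iff \forall b\;\forall\text{coordinate }t:\ \text{bit } b \text{ (coordinate } t) \text{ of } c \text{ is } 0.
\]
This is a $\coNP$ predicate relative to a $\CH$ oracle, so it lies in $\coNP^{\CH}\subseteq\CH$, since $\CH$ is closed under polynomial quantification.

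\textbf{Main obstacle.} The main difficulty is matching the input conventions of \cite[Theorem~6.7]{AGS26}: sparse encoding with binary exponents, parameters $\bm Y$ appearing in the coefficients, and specified degrees. We also need the height bound that guarantees polynomially many index bits. If their theorem is stated for dense or circuit inputs, or without parameters, I would first reduce our sparse input to their format. One option is to use their uniform constant-depth circuits for the resultant directly on sparse inputs of exponential formal degree, which the counting-hierarchy simulation of such circuits tolerates. Another is to extract the $\bm Y^{\bm\beta}$ coefficient by interpolation expressed as a sum over exponentially many points, which is again computable in $\CH$ via iterated counting.
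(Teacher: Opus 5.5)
Your proposal is correct and follows essentially the same route as the paper. You clear denominators over $\Q$ using multihomogeneity of the resultant, invoke \cite[Theorem~6.7]{AGS26} over $\Z[\bm Y]$ or $\F_q[\bm Y]$, and express vanishing of the coefficient as a universal quantifier over the $\poly(N)$-bit addresses of its at most $2^{\poly(N)}$ bits. The encoding obstacle you flag is resolved in the paper exactly by your first option: the input gates of the uniform resultant circuits are indexed by $(i,\alpha,\gamma)$ (cf.\ \cite[Lemma~5.12, Corollary~5.13]{AGS26}) and are evaluated on demand by scanning the sparse description of $P_i$, so neither a dense array nor interpolation is needed.
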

\begin{proof}
  Let $N$ be the input length.  We first consider the case that $k=\Q$.
  To set up the application of \cite[Theorem~6.7]{AGS26} we need to 
  clear the denominators in each input form.
  Notice that if $P_i$ is multiplied by a nonzero integer $d_i$,
  multihomogeneity of the resultant gives
\[
\Res(d_0P_0,\ldots,d_nP_n)
 =\left(\prod_{i=0}^n d_i^{D_i}\right)\Res(P_0,\ldots,P_n),
\]
where $D_i$ is the degree of the resultant in the coefficients of $P_i$.
Thus clearing denominators multiplies every coefficient of the resultant
by the same nonzero integer and does not change which
coefficients vanish. We can thus apply~\cite[Theorem~6.7]{AGS26}
over $\Z[\bm Y]$.

In the case $k=\F_{q}$, represented as in Section~\ref{sec-prelim},
we can directly apply \cite[Theorem~6.7]{AGS26}, which holds over
$\F_{q}[\bm Y]$.

It is important to note that the generic resultant circuit of~\cite{AGS26}, which underlies
the proof of~\cite[Theorem~6.7]{AGS26},
works with a dense coefficient representation of polynomials.
More precisely, by \cite[Lemma~5.12 and
Corollary~5.13]{AGS26}, an input gate may be indexed by
\((i,\alpha,\gamma)\), corresponding to the coefficient of
\(X^\alpha\bm Y^\gamma\) in \(P_i\). We need not explicitly build the
corresponding dense coefficient array: given such an index, its value
can be recovered in polynomial time by scanning the sparse description
of \(P_i\), returning zero if the monomial is absent. Thus the uniform
resultant circuits of \cite{AGS26} can be simulated from our sparse input
representation.

%However, as we now explain,
%this causes no difficulty for our sparse input representation. The
%circuits in~\cite[Lemma~5.12]{AGS26} compute the generic multivariate
%resultant: for forms of degrees \(d_0,\ldots,d_n\), their input gates
%are indexed by the formal coefficients \(u_{i,\alpha}\) of every
%possible monomial \(X^\alpha\) of degree
%\(d_i\). Moreover,~\cite[Corollary~5.13]{AGS26} treats the case in
%which these coefficients are themselves polynomials in auxiliary
%variables \(\bm Y\). In that case the input gates are indexed by
%triples \((i,\alpha,\gamma)\), representing the coefficient of
%\(X^\alpha\bm Y^\gamma\) in \(P_i\), and the circuit outputs the
%coefficients in \(\bm Y\) of the resultant.
%The input expected by the resultant circuit of~\cite{AGS26} can be
%simulated using indexed access to our sparse input. Specifically,
%given an input-gate index \((i,\alpha,\gamma)\), we scan the sparse
%representation of \(P_i\) and return the coefficient of
%\(X^\alpha\bm Y^\gamma\), returning \(0\) if the monomial is
%absent. This lookup takes time polynomial in the sparse input length
%\(N\), since both the exponent vectors and the coefficients have
%polynomial-length binary representations.

Standard multihomogeneous degree and height bounds imply that the
output coefficient representations of the resultant circuit
have at most $2^{\poly(N)}$ bits,
so their bit positions have polynomial-length addresses. A coefficient
is zero exactly when all of its indexed bits 
are zero. Universally quantifying over these polynomial-length addresses
preserves membership in $\CH$.
\end{proof}

\begin{corollary}
\label{cor-valuation-CH}
The problem $\mathrm{MRV}_k$ is in $\CH$ for $k=\Q$ or $k$ a finite field.
\end{corollary}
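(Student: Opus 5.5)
We plan to reduce $\mathrm{MRV}_k$ to a bounded number of alternations of quantifiers over coefficient-vanishing tests, each of which is in $\CH$ by Proposition~\ref{prop-coeff-resultant}.

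Given an instance $\bm P=(P_0,\ldots,P_n)$ with coefficients in $k[\varepsilon,\bm Y]$ and distinguished monomial $\bm Y^{\bm\beta_0}$, we treat $\varepsilon$ as one more auxiliary variable. That is, we regard the $P_i$ as forms in $X_0,\ldots,X_n$ over $k[\varepsilon,Y_1,\ldots,Y_v]$. Writing $\Res(\bm P)=\sum_{\bm\beta,j}c_{\bm\beta,j}\varepsilon^j\bm Y^{\bm\beta}$, Proposition~\ref{prop-coeff-resultant} (with variable tuple $(\varepsilon,\bm Y)$) places the predicate $c_{\bm\beta,j}=0$ in $\CH$. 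Its input is the sparse description of $\bm P$ together with binary encodings of $(j,\bm\beta)$.

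Next we use the equivalence \eqref{eq:cond}: the MRV condition holds if and only if there exist $\bm\beta$ and $j$ with $c_{\bm\beta,j}\neq0$ and $c_{\bm\beta_0,j'}=0$ for all $0\le j'\le j$. We verify both directions directly. If $\mu:=\min_{\bm\beta}\val(c_{\bm\beta}(\varepsilon))<\val(c_{\bm\beta_0}(\varepsilon))$, take $j=\mu$ and $\bm\beta$ attaining the minimum; then all $c_{\bm\beta_0,j'}$ with $j'\le\mu$ vanish. Conversely, such a witness gives $\mu\le j<\val(c_{\bm\beta_0})$. If $\Res(\bm P)=0$, both sides are false, since no nonzero coefficient exists.

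To make the quantifiers polynomially bounded, we bound the exponents. The resultant is multihomogeneous of degree $\prod_{l\neq i}\deg P_l$ in the coefficients of $P_i$. So every monomial $\varepsilon^j\bm Y^{\bm\beta}$ occurring in it has each exponent at most $(n+1)\prod_l\deg P_l\cdot E$, where $E$ bounds the exponents in the input. This is $2^{\poly(N)}$, so $j$, $j'$ and each $\beta_i$ range over polynomial-length binary strings. We then check the bounded-alternation closure: the formula is $\exists(\bm\beta,j)\,[\neg\mathcal C(\bm\beta,j)\wedge\forall j'\le j\ \mathcal C(\bm\beta_0,j')]$, with $\mathcal C\in\CH$. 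Since $\CH$ is closed under complement and under polynomially bounded existential and universal quantification ($\NP^{\CH}$ and $\coNP^{\CH}$ remain in $\CH$, because each level $C_i\P$ relativises and $\NP^{C_i\P}\subseteq C_{i+1}\P$), the whole predicate lies in $\CH$.

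The main obstacle is technical rather than conceptual. We must confirm that Proposition~\ref{prop-coeff-resultant} applies with $\varepsilon$ folded into the auxiliary variables, and that the exponents in the sparse inputs (which may be exponentially large, as produced by the substitution $T=\varepsilon^{m^2}$) still yield $2^{\poly(N)}$ degree bounds. Both follow from the sparse-access simulation in that proof and the multihomogeneous degree bound above.
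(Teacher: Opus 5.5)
Your proposal is correct and follows the paper's proof step for step. You apply Proposition~\ref{prop-coeff-resultant} with parameter tuple $(\varepsilon,\bm Y)$, rewrite the MRV condition as the existential--universal statement \eqref{eq:cond} with exponents bounded by $2^{\poly(N)}$ via the multihomogeneous degree bound, and use closure of $\CH$ under complement and polynomially bounded quantification. Your explicit check of the equivalence (including the case $\Res(\bm P)=0$) and of the closure properties of $\CH$ fills in details the paper leaves implicit.
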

\begin{proof}
  We apply Proposition~\ref{prop-coeff-resultant} with parameter tuple $(\varepsilon,\bm Y)$. Write
\[
\mathcal R(\varepsilon,\bm Y)
:=\Res(P_0,\ldots,P_n)
=\sum_{\bm\beta,j}c_{\bm\beta,j}\varepsilon^j\bm Y^{\bm\beta}.
\]
A multihomogeneous resultant degree bound shows that every
relevant exponent is at most $2^{\poly(N)}$.
For the distinguished monomial $\bm Y^{\bm\beta_0}$, as observed in~\eqref{eq:cond},
the strict inequality in the definition of MRV is equivalent to
\[
\exists\bm\beta\;\exists j :
c_{\bm\beta,j}\neq0
\;\text{and}\;
\forall j'\in[0,j]\quad c_{\bm\beta_0,j'}=0.
\]
%Indeed, the first condition says that the minimum valuation among all
%$\bm Y$-coefficients is at most $j$, while the second says that the
%valuation of the distinguished coefficient is strictly greater than
%$j$. 
All quantified indices have polynomial-length binary encodings,
and each coefficient test is in $\CH$ by
Proposition~\ref{prop-coeff-resultant}. The fixed
existential--universal alternation therefore remains in $\CH$.
\end{proof}

\section{Relating APS to HN and ETR}
\label{sec-etr}
This final section gives a polynomial-time reduction of Hilbert's Nullstellensatz 
to Approximate Polynomial Satisfiability, over any characteristic.
We also show that $\APS_{\mathbb Q}$ reduces to the 
decision problem for the existential theory of real closed fields.
The proof uses the same valuation-amplification idea as in Proposition~\ref{prop:dense-representations}.

\begin{theorem}\label{thm:redu_HN_APS}
	For $k$ a finite field or the field of rational numbers,
    the Hilbert Nullstellensatz problem over $k$ is polynomial-time many-one
    reducible to approximate polynomial satisfiability:
	\begin{equation*}
	\HN_{k}\leq_m \APS_{k}.
	\end{equation*}
\end{theorem}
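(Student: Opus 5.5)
We reduce $\HN_k$ to $\APS_k$ with the valuation-amplification trick from Proposition~\ref{prop:dense-representations}. The point is to force every approximate solution to have bounded valuation, so that it can be specialised at $\varepsilon=0$ to an exact common zero. Given $f_1,\ldots,f_m\in k[X_1,\ldots,X_n]$, we first reduce to degree at most $2$ by the circuit-style introduction of gate variables, so that each $f_i$ becomes a conjunction of quadratic equations $R_g=0$. Since the system is only required to be exactly satisfiable, this step is plain equisatisfiability. We then add, for every variable $X_j$, an auxiliary variable $U_j$ together with the polynomial $U_jX_j-1$, or better a single variable $Y$. The aim is that any approximate solution must keep all coordinates in the valuation ring $\mathcal R$.

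Concretely, we output the system
\[
f_i\ (i\in[m]),\qquad Y_{j}\,X_j^{2}-\ldots
\]
of the following form. For each $j$ we introduce $Y_{j,1},\ldots,Y_{j,h}$ with the squaring equations \eqref{eq:circuit-amp-1}--\eqref{eq:circuit-amp-2} and the products $Y_{j,h}f_i$, where $h$ is chosen so that $2^h$ exceeds $\max_i\deg f_i$. The soundness argument runs as follows. Let $q:=-\min(0,\val(x_j))$. If $q>0$, the squaring chain forces $\val(y_{j_0,h})=-2^hq$. Then $\val(y_{j_0,h}f_i(\bm x))>0$ gives $\val(f_i(\bm x))>2^hq$. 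This bound does not by itself put $\bm x$ in the valuation ring, however, so we need an extra trick. To force boundedness, we add for each $j$ the polynomials $Z_j$ and $Z_j X_j - \ldots$; the cleaner route is homogenisation. We pass to the homogenisations $F_i(X_0,\ldots,X_n)$ together with the equation $X_0-1$. For the right choice of $h$, an approximate solution rescaled by the coordinate of least valuation then yields a projective point of the reduction.

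The simplest plan is to use the squaring amplification to control $\val(f_i(\bm x))$ relative to the size of $\bm x$. Set $\bm x':=\varepsilon^{q}\bm x$ (after passing to a ramified extension, so that $\varepsilon^q\in\KK$), which lies in $\mathcal R^n$ with some coordinate a unit. Homogeneity gives $\val(F_i(\varepsilon^q,\bm x'))=\val(f_i(\bm x))+q\deg f_i>2^hq-q\deg f_i+q\deg f_i\geq 0$ once $2^h>\deg f_i$, and strictly positive. Reducing modulo the maximal ideal yields a point $[x_0':\bar{\bm x}']$ with $\bar x_0'=0$ that is a common zero of the $F_i$. The remaining difficulty is that this point lies at infinity, not in the affine part. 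This is the main obstacle, and we intend to handle it by choosing the gadget so that a bounded ($q=0$) approximate solution is forced. To achieve this, we multiply the \emph{squaring} equations' partner terms by extra copies so that the resulting constraint yields $\val(1)>0$ when $q>0$. Concretely, we include the polynomial $Y_{j,h}\cdot G$, where $G$ is a fixed polynomial that equals $1$ modulo the $f_i$ after amplification. If that fails, we first make $\{F_i=0\}$ have no zeros at infinity, by adjoining a fresh variable $X_{n+1}$ with $f_{m+1}:=X_{n+1}\cdot(\text{something})$, so that the limit point must be affine. Once $q=0$, reducing $\bm x$ modulo the maximal ideal of $\mathcal R$ gives $\bar{\bm x}\in K^n$ with $f_i(\bar{\bm x})=0$. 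This holds because the coefficients lie in $k$ and the residue field of $\KK$ is $K$.

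Completeness is immediate: an exact zero $\bm x^*\in K^n$ of the $f_i$, extended by exact solutions of the squaring equations, is an approximate zero of the output system. The output has polynomially many variables and equations, each of degree $O(\max_i\deg f_i)$ after the gate reduction. Hence the reduction is polynomial-time and many-one, and it is valid in every characteristic, since only valuations and reduction modulo the maximal ideal are used.
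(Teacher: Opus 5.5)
There is a genuine gap. The proposal never settles on a gadget for which soundness holds, and its one concrete choice, amplification by $Y_{j,h}\approx X_j^{2^h}$ with only $2^h>\max_i\deg f_i$, is too weak.

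Take $p_1=x_1^2$, $p_2=x_1-x_2^2$, $p_3=x_2-x_3^2$, $p_4=1-x_3x_4$. These have no common zero in $K^4$. All degrees are at most $2$, so your rule gives $h=2$ (and the gate preprocessing is vacuous here, the $p_i$ being quadratic). Consider the point $x_3=\varepsilon$, $x_2=\varepsilon^2$, $x_1=\varepsilon^4$, $x_4=\varepsilon^{-1}$, with $y_{j,\ell}:=x_j^{2^\ell}$ exactly. It makes $p_2,p_3,p_4$ vanish, and $\val(y_{4,2}\,p_1)=-4+8>0$. So every polynomial of your output system has positive valuation, although the input is a negative $\HN$ instance. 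Longer chains $x_1^2,\,x_1-x_2^2,\ldots,x_{n-2}-x_{n-1}^2,\,1-x_{n-1}x_n$ behave the same way. With $x_{n-1}=\varepsilon$ and $x_n=\varepsilon^{-1}$, they force $2^h\geq 2^{n-1}$. So the exponent must grow like an effective Nullstellensatz degree bound, not like the input degrees.

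The homogenisation detour cannot repair this. Reducing $\varepsilon^q\bm x$ modulo the maximal ideal only produces a common zero at infinity of the leading forms, e.g.\ $[0:0:1]$ for $XY-1,\,X$. Such a zero says nothing about affine zeros, and that step does not even use the amplification.

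The suggested fixes are not well defined. A polynomial $G$ that ``equals $1$ modulo the $f_i$'' is a Nullstellensatz certificate. It exists only on negative instances and cannot be written down by a polynomial-time many-one reduction. The fresh-variable fix is left unspecified.

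Forcing bounded approximate solutions is also the wrong target. For $\bm p=(x_1)$ in two variables, the point $x_1=0$, $x_2=\varepsilon^{-1}$ (with exact squares) is an exact zero of your output system with $q=1$.

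The missing idea is to use the certificate in the analysis rather than in the construction, and to size the amplification by it. Let $\Delta:=\max_i\deg p_i$ and take the effective Nullstellensatz bound $B:=\Delta^m$ (Jelonek). The paper outputs the polynomials $p_i$ and $x_j^{B+1}p_i$. These have polynomial-size sparse descriptions because $\log(B+1)$ is polynomial. Your squaring chain with $2^h>B$, i.e.\ $h\approx m\log_2\Delta$, would serve equally well.

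Soundness then goes as follows. If there were no common zero, then $\sum_i q_ip_i=1$ with $\deg q_i\le B$. Take an approximate solution $\bm\beta$ and set $s:=\min_j\val(\beta_j)$. Then $\val(q_i(\bm\beta))\ge\min(0,Bs)$ and $\val(p_i(\bm\beta))>\max(0,-(B+1)s)$. Hence every $q_ip_i(\bm\beta)$ has positive valuation, contradicting $\val(1)=0$. No boundedness of $\bm\beta$ and no reduction modulo the maximal ideal is needed.
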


\begin{proof}
Let \(\bm p =(p_1(\bm x),\ldots,p_m(\bm x))\in k[x_1,\ldots,x_n]^m\) be an instance of $\HN_{k}$
and let $K$ denote the algebraic closure of $k$.
    Write \( \Delta := \max_i \deg p_i \) and set 
	\( B := \Delta^m \). Then $B$ is an effective Nullstellensatz degree bound (by, e.g.,~\cite[Theorem 1.1]{Jel05}), that is, if \( \bm 0 \not\in \bm p(K^n)\), then there exist polynomials \( q_1,\ldots,q_m \in K[x_1,\ldots,x_n]\) 
	such that	\(	1 = \sum_{i=1}^m q_i p_i \) and \(	\deg q_i\leq B\) for all \(i\in [m]\). 
	
	Define a polynomial map \( {\bm f} = (f_{i,j}) : K^n \to K^{m(n+1)}\) with the following components:
	\[
	f_{i,0}(\bm x)=p_i(\bm x),
	\qquad
	f_{i,j}(\bm x)=x_j^{B+1}p_i(\bm x) \qquad \textrm{with }1 
\le i\le m, 1\le j\le n.
	\]
	
	We claim that
	\begin{equation*}
	\bm 0 \in \bm p(K^n)
	\quad\iff\quad
	\bm 0\in\overline{{\bm f}(K^n)}.
	\end{equation*}
	
	Suppose first that there exists \( {\bm a} \in K^n \) such that $\bm p(\bm a)=\bm 0$.
	Then \(	f_{i,j}(a)=\bm 0 \) for every \(i\) and \(j\). Hence, \( \bm 0 \in {\bm f}(K^n) \subseteq \overline{{\bm f}(K^n)}\).
	
	For the converse direction, suppose that \( \bm 0 \in \overline{{\bm f}(K^n)}\). Using the valued-field formulation of \(\APS_k\) in terms of rational functions, there exists
	\[
	{\bm \beta}(\varepsilon)
	=
	(\beta_1(\varepsilon),\ldots,\beta_n(\varepsilon))
	\in \overline{K(\varepsilon)}^n
	\]
	such that the \(\varepsilon\)-valuation of all components of $\bm f(\bm \beta)$ is positive: \( \val\left(
	f_{i,j}({\bm \beta}(\varepsilon)) \right) >0 \) for all \(i\) and \(j\).
	
	Suppose for a contradiction that
	\[
	\bm 0 \not\in \bm p(K^n)\, .
	\]
	Then, by the effective Nullstellensatz, there are $q_i$ in \(K[\bm x]\) of degree at most \(B\) satisfying
	\begin{equation}\label{eq:eff_Nullstellensatz}
	\sum_{i=1}^m q_i p_i=1.
	\end{equation}
	
	Write \(s := \min_j \left( \val(\beta_j) \right)\) and let \(j_0\) be such that \(\val(\beta_{j_0}) = s\).
	For all \(i \le m\), we have
	\begin{equation}\label{eq:val_q}
		\val (q_i(\bm \beta)) \ge \min \left(0, B \cdot s \right).
	\end{equation}
	Moreover, we know that for all \(i\), 
    \begin{equation*} 
    \val \left( p_i({\bm \beta})\right) = \val \left( f_{i,0}({\bm \beta}) \right) >0 \quad \textrm{and}\quad
    \val \left( \beta_{j_0}^{B+1}p_i({\bm \beta})\right) = \val \left( f_{i,j_0}({\bm \beta}) \right) >0.
    \end{equation*}
    We deduce that
	\begin{equation}\label{eq:val_p}
		\val \left( p_i({\bm \beta}) \right) > \max \left( 0, -(B+1)s \right).
	\end{equation}
	From Equations~(\ref{eq:val_q}) and~(\ref{eq:val_p}), we get that 
	\begin{equation*}
		\val \left(p_i q_i ({\bm \beta}) \right) >0 \quad \textrm{for all }i.
	\end{equation*}
	However, the above bound implies that, after instantiating the identity~(\ref{eq:eff_Nullstellensatz}) at \(\bm \beta\), the left-hand side has positive valuation while the right-hand side has zero valuation. This is a contradiction.  We conclude that 
    $\bm 0 \in \bm p(K^n)$.  In summary, we have $\bm 0 \in \bm p(K^n) \Longleftrightarrow \bm 0 \in \overline{\bm f(K^n)}$.
	
	It remains to observe that the reduction can be carried out in polynomial time.  The critical fact is that 
    $\log (B+1)$ is polynomially bounded in the input size.  From this it is evident that representations of the $mn$ additional polynomials  $x_j^{B+1}p_i$ constructed in the reduction can be output in polynomial time.
	Consequently, \(\HN_{k} \leq_m \APS_{k}\).
\end{proof}

Recall that $\mathrm{ETR}$ is the set of true sentences of the form
\[
(\exists X_1,\ldots,X_n)\ \phi(X_1,\ldots,X_n),
\]
where $\phi$ is a quantifier-free formula of first-order logic
over the signature $(0,1,+,*,<)$, interpreted over a real closed field.

\begin{theorem}
$\APS_{\Q}$ is polynomial-time reducible to the decision problem for $\mathrm{ETR}$.
\end{theorem}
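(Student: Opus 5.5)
We would reduce $\APS_{\Q}$ to $\mathrm{ETR}$ by encoding the valued-field formulation of APS inside a real closed field. The natural valued field here is the field of real Puiseux series, with $\val$ extended to $\KK$ and $\CC$ in place of $K$. First we replace $K=\overline{\Q}$ by $\CC$. This is harmless because the APS status is determined by the annihilator ideal $I_{\bm f}\subseteq \Q[\bm Y]$: the instance is negative exactly when some annihilator has nonzero constant term, and this does not depend on the algebraically closed extension. Next we write each complex variable as $X_j=U_j+\mathrm{i}V_j$ and split every $f_l$ into real and imaginary parts $g_l,h_l\in\Q[\bm U,\bm V]$. The condition $\bm 0\in\overline{\bm f(\CC^n)}$ then becomes a statement about the Euclidean closure. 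Over $\CC$, the Zariski closure of the image of a polynomial map coincides with its Euclidean closure, since the image is constructible and contains a Zariski-dense open subset of its closure. So the instance is positive if and only if
\[
\forall \delta>0\ \exists \bm u,\bm v\in\RR^n:\ \sum_{l}\bigl(g_l(\bm u,\bm v)^2+h_l(\bm u,\bm v)^2\bigr)<\delta .
\]

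This sentence has a $\forall\exists$ prefix, so it is not yet an ETR instance. The main work is removing the universal quantifier over $\delta$, and we would do it with the amplification trick from Proposition~\ref{prop:dense-representations} and Theorem~\ref{thm:redu_HN_APS}. Set $\Phi:=\sum_l (g_l^2+h_l^2)$. We would use a Łojasiewicz-type or semialgebraic growth bound: if $\inf\Phi=0$ is approached, then it is approached along a real semialgebraic curve $\bm x(t)$ with $\val(\bm x)\ge -E$ and $\val(\Phi(\bm x))>0$, where $E\le 2^{\poly(N)}$. Conversely, if $\inf\Phi>0$, then the quantitative bounds give $\Phi(\bm x)\ge c\,(1+\|\bm x\|^2)^{-M}$ with $\log M=\poly(N)$; this is Jeronimo–Perrucci-type separation of a positive polynomial's infimum from $0$. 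With these bounds in hand, we would express ``$\Phi$ is smaller than $(1+\|\bm x\|^2)^{-M}$'' existentially. To do so, we introduce repeated-squaring variables $y_1=1+\|\bm x\|^2$ and $y_{r+1}=y_r^2$, so that $y_h=(1+\|\bm x\|^2)^{2^{h-1}}$ with $2^{h-1}>M$. The resulting existential sentence is $\exists \bm u,\bm v,\bm y:\ \Phi\cdot y_h< c'$. Its bit size is polynomial, and it is equivalent to APS positivity.

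The constant $c$ poses a further difficulty: it is doubly-exponentially small, $2^{-2^{\poly(N)}}$. We would also encode $c$ by repeated squaring, adding existential variables $w_0=1/2$ and $w_{r+1}=w_r^2$ constrained by equalities, so that $c'=w_{h'}$ is pinned down uniquely.

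We expect the main obstacle to be the quantitative separation statement. We need an explicit bound, with exponents of polynomial bit length, that holds uniformly: a positive instance must admit points where $\Phi\cdot(1+\|\bm x\|^2)^{2^{h}}$ falls below the threshold, and a negative instance must have this product bounded below by the threshold everywhere. For the positive direction we would use the curve witness from \cite[Theorem~4.2]{GSS19}, which has coordinates in $\CC[\varepsilon,\varepsilon^{-1}]$ with degrees controlled by the Perron bound used in Claim~\ref{claim:ONE}; this shows that $\Phi$ decays polynomially in $t$ while $\|\bm x\|$ grows only polynomially. For the negative direction we would use an annihilator $A$ with $A(\bm 0)=1$ and degree at most $\Delta^{m}$, coming again from Perron. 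This gives $1=A(\bm f)=1+\sum_{\bm\alpha\neq 0}c_{\bm\alpha}\bm f^{\bm\alpha}$. If every $|f_l|$ is tiny relative to the height of $A$, the sum $\sum_{\bm\alpha\neq 0}c_{\bm\alpha}\bm f^{\bm\alpha}$ is too small to cancel the constant $1$. Hence some $|f_l|$ is bounded below by roughly $1/(\mathrm{height}(A)\cdot \mathrm{deg})$, which is in fact independent of $\bm x$ and makes the amplification variables unnecessary in this direction. Bounding the height of $A$ by $2^{2^{\poly(N)}}$, via linear algebra over the degree-$\Delta^m$ monomials together with Cramer's rule, is the delicate step. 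Once that bound is available, we would encode the threshold by the squaring chain described above.
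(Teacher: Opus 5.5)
\paragraph{The amplified sentence is wrong on positive instances.} The sentence you commit to, $\exists \bm u,\bm v,\bm y:\ \Phi\cdot y_h<c'$, is not equivalent to APS positivity. The factor $y_h=(1+\|\bm x\|^2)^{2^{h-1}}$ does no harm on negative instances, since $y_h\geq 1$. It does break positive instances whose witnesses escape to infinity.

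Take the paper's example $\bm f=(XY-1,X)$, which is a positive instance. If $\Phi<1/4$, then $|X|<1/2$ and $|XY-1|<1/2$. So $|XY|>1/2$, and $\Phi\,(1+\|\bm x\|^2)\geq |X|^2|Y|^2>1/4$. Hence $\Phi\cdot y_h\geq 1/4$ on all of $\CC^2$, and your sentence is false for every threshold $c'\leq 1/4$.

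Your positive-direction argument does not repair this. Along a witness curve with $\min_l\val(f_l(\bm x))=a$ and pole order $E$, the product $\Phi\cdot y_h$ has valuation $2a-2^hE$. This is positive only if $a>2^{h-1}E$. But $2^{h-1}$ must exceed the exponentially large $M$, and reparametrising scales $a$ and $E$ together, so nothing forces this. In the example every witness has $a\leq E$. The \L{}ojasiewicz-type statements about $E$ and about $c\,(1+\|\bm x\|^2)^{-M}$ are the wrong shape: what you need is a lower bound on $\inf\Phi$ itself.

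\paragraph{The fix, and how your route then compares with the paper's.} Your own remark that the negative-case bound is independent of $\bm x$ gives the fix. Drop $\bm y$ and use
\[
\exists\bm u,\bm v,\bm w:\ 2w_0=1\wedge\bigwedge_r w_{r+1}=w_r^2\wedge\Phi(\bm u,\bm v)<w_{h'} .
\]
This is exactly the paper's sentence, and the positive direction is then immediate from $\inf\Phi=0$.

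The remaining difference from the paper is how the lower bound on $\inf\Phi$ for negative instances is obtained. The paper uses the Schaefer--\v{S}tefankovi\v{c} separation bound for $\mathcal T=\{t:\exists\UU,\VV\ S(\UU,\VV)<t\}$, a black box from real algebraic geometry.

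Your route is a valid, more elementary, certificate-based alternative. You take an annihilator $A$ with $A(\bm 0)=1$. You bound its coefficients by $2^{2^{\poly(N)}}$ using Cramer and Hadamard on a linear system with $2^{\poly(N)}$ unknowns and entries of bit length $2^{\poly(N)}$. You then evaluate $0=A(\bm f(\bm x))$ (not $1=A(\bm f)$ as you wrote) to get $\max_l|f_l(\bm x)|\geq 2^{-2^{\poly(N)}}$ at every $\bm x\in\CC^n$.

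Two points in this route need repair.
\begin{itemize}
\item Perron's bound only yields \emph{some} nonzero annihilator of $n+1$ polynomials, not one with nonzero constant term. The degree bound must come from elsewhere. One option is to apply Claim~\ref{claim:ONE} to a good projection from Proposition~\ref{prop-subsets-n-independent} and then set $Y=0$. Another is Heintz's bound on the degrees of equations defining the image closure.
\item With binary exponents, $\Phi$ does not expand to a polynomial-size formula. You need gate variables, as in the paper's appeal to Proposition~\ref{prop:dense-representations}.
\end{itemize}
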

\begin{proof}
Let $\bm f= (f_1,\ldots,f_m) \in \mathbb Q[X_1,\ldots,X_n]^m$ 
be an instance of $\APS_{\mathbb Q}$.  Write $N$ for the description length.
Introduce vectors
$\UU=(U_1,\ldots,U_n)$ and $\VV=(V_1,\ldots,V_n)$ of real indeterminates, and 
let $g_r,h_r\in\Q[\UU,\VV]$ be such that
\[
{f_r(\UU+\mathrm i\VV)=g_r(\UU,\VV)+\mathrm i\,h_r(\UU,\VV)
\qquad(r\in[m]),}
\]
where $\mathrm i^2=-1$. Define
\[
S(\UU,\VV):={\sum_{r=1}^m
       \bigl(g_r(\UU,\VV)^2+h_r(\UU,\VV)^2\bigr)}.
\]
Then for all $\uu,\vv\in \mathbb R^n$, $S(\uu,\vv)$ equals
$\|\bm f(\uu+\mathrm i\vv)\|_2^2$. 

By Chevalley's theorem, 
$\bm f(\overline{\mathbb Q}^n)$ is a constructible subset of $\overline{\mathbb Q}^m$.  Since Euclidean and Zariski closures of a
constructible subset of $\overline{\mathbb Q}^m$ coincide, putting
\[
\alpha:=\inf_{\uu,\vv\in\RR^n}S(\uu,\vv),
\]
we have that $\bm f$ is a positive APS instance iff $\alpha=0$.

\begin{claim}
If $\alpha>0$, then $\alpha\geq2^{-2^{N^{O(1)}}}$.
\end{claim}
\begin{proof}
Consider the semialgebraic set
\[
\mathcal T:=\{t\in\RR:{(\exists\UU,\VV\in\RR^n)
                                      \ S(\UU,\VV)<t}\}.
\]
The predicate $S(\UU,\VV)<t$ can be represented by a formula of length
polynomial in the original input length $N$.
By Proposition~\ref{prop:dense-representations}, we may work with
polynomial-size arithmetic circuits for the input of the APS instance. Replacing
complex arithmetic in these circuits by arithmetic on real and imaginary
parts gives polynomial-size circuits for \(g_r\) and \(h_r\).
Introducing an existentially quantified variable for each internal gate
then yields a formula of polynomial length.

Since $\alpha$ is the distance between the set $\mathcal T$ and 0,
the claim follows from the quantitative separation bound~\cite[Corollary~3.4]{SchaeferStefankovic17}, which gives a lower bound of the form $2^{-2^{L+O(1)}}$ for two semialgebraic sets of positive distance that have description size at most $L$.
\end{proof}

By the claim there exists an integer constant $c>0$ such that   
\[
\alpha=0\quad\text{iff}\quad
\alpha<2^{-2^{N^c}}=:\tau.
\]
The value $\tau$ has a short description using iterated squaring. We can then write 
an existential formula for the predicate $\alpha=0$ as follows: 
%indeterminates $T_0,\ldots,T_{\rho(N)}$, our formula is
\[
\exists\UU,\VV,T_0,\ldots,T_{N^c}\quad
\left(
2T_0=1\ \land\
\bigwedge_{j=0}^{N^c-1}T_{j+1}=T_j^2\ \land\
S(\UU,\VV)<T_{N^c}
\right)\, .
\]
In summary, $\bm f$ is a positive instance of $\APS_\mathbb Q$ if and only if
$\alpha=0$, and $\alpha=0$ if and only if the above is a true formula of $\mathrm{ETR}$.
\end{proof}

\bibliographystyle{alpha}
\bibliography{ref1}

\end{document}